\documentclass[aps, pra, a4paper,twocolumn,showpacs,10pt]{revtex4-1}

\usepackage{bbm,amsmath,amssymb,amsthm,bm, bbold,textcomp,graphicx,epsfig,color,verbatim,enumerate}

\newcommand{\ket}[1]{\left|#1\right\rangle}
\newcommand{\bra}[1]{\left\langle #1\right|}
\newcommand{\bracket}[2]{\left\langle #1|#2\right\rangle}
\newcommand\defn[1]{\textsl{#1}}

\newcommand\ketbra[2]{|#1\rangle\langle#2|}
\newcommand\cH{{\mathcal H}}

\newcommand\cB{{\mathcal B}}
\newcommand\cE{{\mathcal E}}

\newcommand{\one}{\mathbb{1}}

\def\sx{\sigma_x}
\def\sy{\sigma_y}
\def\sz{\sigma_z}

\def\j{{\bm j}}

\makeatletter
\newtheorem*{rep@theorem}{\rep@title}
\newcommand{\newreptheorem}[2]{%
\newenvironment{rep#1}[1]{%
 \def\rep@title{#2 \ref{##1}}%
 \begin{rep@theorem}}%
 {\end{rep@theorem}}}
\makeatother

\newreptheorem{theorem}{Theorem}
\newreptheorem{lemma}{Lemma}
\newtheorem{lemma}{Lemma}

\newtheorem{theorem}{Theorem}

\newtheorem{observation}{Observation}


\begin{document}

\title{Algebraic metrology: Pretty good states and bounds}
\author{Michael Skotiniotis$^1$, Florian Fr\"{o}wis$^{1,2}$, Wolfgang D\"{u}r$^1$, Barbara Kraus$^1$}

\affiliation{$^1$ Institut f\"ur Theoretische Physik, Universit\"at
  Innsbruck, Technikerstr. 25, A-6020 Innsbruck,  Austria.\\
  $^2$ Group of Applied Physics, University of Geneva, CH-1211 Geneva 4, Switzerland}
\date{\today}

\begin{abstract}
We investigate quantum metrology using a Lie algebraic approach for a class of Hamiltonians, including local and nearest-neighbor interaction Hamiltonians. Using this Lie algebraic formulation, we identify and construct highly symmetric states that admit Heisenberg scaling in precision in the absence of noise, and investigate their performance in the presence of noise. To this aim we perform a numerical scaling analysis, and derive upper bounds on the quantum Fisher information.
\end{abstract}
\pacs{03.67.-a, 03.65.Ud, 03.65.Yz, 03.65.Ta}
\maketitle

\section{Introduction}
\label{intro}
The ultra-precise determination of an optical phase~\cite{Holland:93, *Hwang:02, *Walther:04, *Mitchell:04}, 
the strength of a local magnetic field~\cite{Fleischhauer:00, *Budker:07, *Aiello:13}, atomic 
frequency~\cite{Wineland:92, *Bollinger:96, *Hempel:13}, spectroscopy~\cite{Roos:06, *Chwalla:07}, and 
clock-synchronization~\cite{Valencia:04, *deburgh:05} are just a few of the many celebrated achievements 
of quantum metrology~\cite{GLM06, *Giovanetti:11}.  If $N$ systems, deployed to probe the 
dynamics of a physical process, are prepared in a particular entangled state the precision in the 
estimation of the relevant parameters scales at the \defn{Heisenberg limit} $\mathcal{O}(N^{-1})$.
Contradistinctively,  if the $N$ probes are prepared in a product state then the \defn{standard limit}, $\mathcal{O}
(N^{-1/2})$, in precision is achieved~\cite{GLM04}.   Note that $\mathcal{O}(N^{-1})$ is the maximum achievable limit 
in precision---the Heisenberg limit---in cases where the spectral radius of the operator describing the unitary dynamics 
of the $N$ probes scales linearly with $N$, e.g., local and $k$-local Hamiltonians.  In what follows we consider cases where this latter requirement holds.

In quantum metrological scenarios considered most often the dynamics, in the absence of 
decoherence, describing the evolution of the $N$ probes are represented by the unitary operator $U(\theta)
=e^{i\theta H}$ where $\theta$ is the parameter, or set of parameters, to be estimated~\footnote{For the remainder of 
this paper we shall assume that the goal is to estimate a single parameter.} and $H$ is the 
Hamiltonian governing the evolution.  In most applications of quantum metrology to date the 
Hamiltonian is assumed to be \defn{local}, i.e.~$H=\sum_{i=1}^N h^{(i)}$, where $h^{(i)}$ acts on probe $i$.  
Metrological scenarios making use of one-dimensional cluster state Hamiltonians have also been studied~
\cite{RJ09}.  For local Hamiltonians employing qubits as probes the optimal state is the GHZ state~\cite{GHZ} in the 
case of atomic frequency spectroscopy~\cite{Wineland:92, Bollinger:96}, and the 
so-called NOON state in the case of optical interferometry~\cite{Hwang:02}.  The latter is a linear 
superposition of $N$ photons in an optical interferometer, with the $N$ photons being either in the upper or lower arm 
of the interferometer.

However, it is known that in the presence of local dephasing noise, where the noise operators commute with the 
Hamiltonian, GHZ and NOON states perform no better than separable states~\cite{Huelga:97}.   Indeed, in this 
scenario quantum metrology offers only a constant factor improvement, in the asymptotic limit, over the standard
limit~\cite{Escher:11, Kolodynski:12, *Kolodynski:13, Alipour:14}. The state that achieves this improvement, in the limit of large 
number of probes, is the so-called spin-squeezed state~\cite{OK01}.   

In this work we are mainly concerned with quantum metrology beyond the local Hamiltonian condition and the 
construction of states that perform favorably both in the presence and absence of noise.  Specifically, we use Lie algebraic techniques to:
\begin{enumerate}
\item Identify a class of Hamiltonians which belong to the Lie algebra of the special unitary group of two dimensions, 
$\mathfrak{su}(2)$.
\item For all such Hamiltonians, we provide a recipe for constructing states that achieve Heisenberg scaling in 
precision for noiseless metrology.
\item Determine the performance of these states for metrology in the presence of a local and nearest-neighbor 
Hamiltonian in the presence of local dephasing noise.
\end{enumerate}

Whereas there are other methods for identifying states that yield Heisenberg scaling precision, such as GHZ-type states
(see Sec.~\ref{sec:ParEst}),  our method for constructing alternative states for parameter estimation is of interest for the 
following two reasons.  On the one hand, the optimal states for parameter estimation are rather difficult to obtain in certain 
experimental set-ups, and on the other hand these optimal states are known to be extremely susceptible to noise, so 
much so that there precision scaling quickly deteriorates.  The pretty good states we introduce in this work, as well as 
their construction, maybe experimentally more accessible than the optimal states. More importantly these states, as we 
show, may perform better in the presence of noise than the optimal  states of noiseless metrology thereby paving the way 
towards practical quantum metrology in the presence of noise and imperfections.

Similar techniques, based on the Lie algebra $\mathfrak{su}(2)$, were 
used by Yurke and McCall to analyze the performance of optical interferometers in the presence of a local 
Hamiltonian~\cite{Yurke:86}.  Our results are broader as we develop techniques for 
constructing pretty good states for a large class of Hamiltonians including Hamiltonians based on 
graphs~\cite{Briegel:01, RJ09} as well as nearest-neighbor Hamiltonians which appear frequently in the study of 
interacting systems.  

Using this Lie algebraic approach we provide a recipe for constructing states that, 
in the absence of noise, achieve Heisenberg scaling in precision.  For the case of local Hamiltonians the states 
constructed using our procedure are the well-known Dicke states~\cite{Dicke:54}.  These are permutational symmetric 
states of $N$ particles with $0\leq k\leq N$ of the particles 
in the ground state and $N-k$ in the excited state. For other Hamiltonians in $\mathfrak{su}(2)$ for which permutation 
symmetry does not hold, such as nearest-neighbor Hamiltonians, our procedure can nevertheless  unambiguously 
give rise to states that achieve Heisenberg scaling for noiseless metrology.  

Finally, we determine the performance of the states constructed by our procedure in the presence of local dephasing 
noise.  Specifically, we use a well-known upper bound to the quantum Fisher information (QFI)~\cite{Escher:11} to 
determine the performance of our states for the case of local and nearest-neighbor Hamiltonians.  
For local Hamiltonians we find that states constructed via our method yield the same asymptotic bound as any 
state that is not a product state.  For finite $N$ our states yield a different pre-factor compared to the bound obtained 
for the product and GHZ states.  However, for nearest-neighbor 
Hamiltonians, we show that for any state the bound for the QFI of~\cite{Escher:11}, computed using a restricted set of Kraus operators, is equal to the QFI in the 
absence of noise. Consequently, we numerically compute the actual QFI for moderate 
values of $N$ and find that, for metrology using nearest-neighbor Hamiltonians under local dephasing noise, states 
constructed via our procedure are sub-optimal but outperform product states.  

The outline of this paper is as follows.  In Sec.~\ref{sec:Preliminaries} we review the mathematical background of 
both classical and quantum metrology (Sec.~\ref{sec:ParEst} and Sec.~\ref{sec:bounds} respectively), and Lie 
algebras (Sec.~\ref{sec:LieAlg}).  
In Sec.~\ref{sec:LieFormulation} we formulate noiseless quantum metrology in a Lie algebraic 
framework.  Using this framework we construct states that exhibit Heisenberg 
scaling (Sec.~\ref{sec:PGS}), and determine an entire class of Hamiltonians for which our construction 
applies (Sec.~\ref{sec:Construction}).  We illustrate our construction using a local 
Hamiltonian, as well as two non-local Hamiltonians. In Sec.~\ref{sec:noise} we 
study the performance of our constructed states  for noiseless metrology in the presence of local dephasing noise, and 
in  particular we provide both analytic (Sec.~\ref{sec:analytic}) and numerical results (Sec.~\ref{sec:numeric}) of the 
performance of these states for local, as well as nearest-neighbor Hamiltonians. We summarize and conclude in 
Sec.~\ref{sec:conclusion}.

\section{Preliminaries}
\label{sec:Preliminaries}
In this section we provide a brief background of both noiseless (Sec.~\ref{sec:ParEst}) and noisy quantum metrology 
(Sec.~\ref{sec:bounds}).  We outline key results in both these scenarios and introduce a characterization of pretty 
good states for noiseless and noisy metrology. For the sake of completeness we re-derive the bound by 
Escher {\it et al.}~\cite{Escher:11} pertaining to the best possible precision achievable by a quantum strategy in the 
presence of noise (Sec.~\ref{sec:bounds}).  
In Sec.~\ref{sec:LieAlg} we review the theory of $\mathfrak{su}(2)$ Lie algebras. 

\subsection{Classical and Quantum Parameter Estimation}
\label{sec:ParEst}

In a metrological scenario the goal is to estimate a parameter, $\theta\in\mathbb{R}$, of a population 
described by random variable, $X$, whose elements, $x\in\mathbb{R}$, correspond to measurement outcomes with 
respective probability distribution, $p(x|\theta)$, given a random finite  sample of $n$ data drawn from this 
population.  Using a suitable function, $\hat{\theta}:X^{n}\to\mathbb{R}$ (known as an  \defn{estimator}) an 
\defn{estimate} of $\theta$ is given by $\hat{\theta}\left(\{x_i\}_{i=1}^{n}\right)$.  

Two desired properties of any good estimator is unbiasedness and minimum variance.  An estimator is 
\defn{unbiased} if its expected value, $\langle\hat{\theta}\rangle$, with respect to the probability 
distribution $p(x|\theta)$ is equal to $\theta$. Furthermore, an estimator is said to have \defn{minimum variance}, 
$\delta\theta^2\equiv\langle(\frac{\hat{\theta}}{\mathrm{d}\langle\hat{\theta}\rangle/\mathrm{d}\theta}-\theta)^2\rangle$, if the 
variance of any other estimator is greater or equal to 
$\delta\theta^2$~\footnote{The factor $\mathrm{d}\hat{\theta}/\mathrm{d}\theta$ takes care of any difference in units 
between $\hat{\theta}$ and $\theta$.}.  A lower bound on the variance of any estimator is given by the 
well-known Cram\'{e}r-Rao inequality~\cite{Cramer:61}, 
\begin{equation}
 \delta\theta^2\geq\frac{1}{\nu \Phi(\theta)},
 \label{eq:CCR}
\end{equation}
where $\Phi(\theta)$ is the \defn{Fisher information} given by~\cite{Fisher:22}
\begin{equation}
 \Phi(\theta)=\int\frac{1}{p(x|\theta)}\left(\frac{\partial \ln p(x|\theta)}{\partial\theta}\right)^2\mathrm{d}x,
 \label{eq:CFI}
\end{equation}
and $\nu$ is the number of repetitions of the experiment. It is known that the lower bound in Eq.~\eqref{eq:CCR} is 
saturated in the limit $\nu\to\infty$ by the maximum likelihood estimator~\cite{Fisher:25}. The precision of the estimator 
is given by the square root of its variance.

The Fisher information quantifies the amount of information carried by the random variable $X$ about $\theta\in
\mathbb{R}$.  In quantum mechanics the parameter $\theta$ is imprinted in a state $\rho(\theta)\in\cB(\cH)$, where 
$\cB(\cH)$ denotes the set of bounded operators on the Hilbert space, $\cH$,  
of a quantum system after undergoing some evolution, as will be explained shortly, and the probability distribution is 
given by  
$p(x|\theta)=\mathrm{Tr}(M_x\rho(\theta)M_x^\dagger)$, where the set of measurement operators $\{M_x:\cB(\cH)\to\cB(\cH)\}$ 
satisfy $\sum_x M_x^\dagger M_x=I$.  Consequently, the Fisher information is different for different choice of 
measurement operators. Denoting by $\Phi_{M_x}(\theta)$ the Fisher information, Eq.~\eqref{eq:CFI}, for the 
measurement given by $\{M_x\}$, the \defn{quantum Fisher information} (QFI) is defined as
\begin{equation}
 \mathcal{F}(\rho(\theta)):=\max_{\{M_x\}}\Phi_{M_x}(\theta),
 \label{eq:QFI}
\end{equation}
and quantifies the amount of information about $\theta$ one learns when using the most informative 
measurement.  Consequently, the \defn{quantum Cram\'{e}r-Rao inequality} is given by~\cite{H76, *H80}
\begin{equation}
 \delta\theta^2\geq\frac{1}{\nu\mathcal{F}(\rho(\theta))}.
 \label{eq:QCR}
\end{equation} 
 
It has been shown in~\cite{BC94} that the QFI is given by
\begin{equation}
\mathcal{F}(\rho(\theta))=\mathrm{Tr}\left[L_{\theta}\rho(\theta)L_{\theta}\right],
\label{eq:QFI1}
\end{equation}
with
\begin{equation}
L_{\theta}=2\sum_{\alpha,\beta}\frac{\bra{\alpha}\dot{\rho}(\theta)\ket{\beta}}{\lambda_\alpha+\lambda_
\beta}\ket{\alpha}\bra{\beta}
\label{eq:SLD}
\end{equation}
the \defn{symmetric logarithmic derivative}, where $\dot{\rho}(\theta)=\partial\rho(\theta)/\partial\theta$, $
\lambda_\alpha$ the eigenvalues of $\rho
(\theta)$, $\ket{\alpha}$ the corresponding eigenvectors, and the sum in Eq.~\eqref{eq:SLD} is over all $
\alpha,\beta$ satisfying $\lambda_\alpha+\lambda_\beta\neq0$. The most informative 
measurement is the one whose elements are the projectors on the eigenspaces of the symmetric logarithmic 
derivative. 

Two important properties of the QFI are its additivity, $\mathcal{F}\left(\rho(\theta))^{\otimes 
N}\right)=N\mathcal{F}\left(\rho(\theta)\right)$, and convexity, 
$\mathcal{F}\left(\sum_i p_i \rho_i(\theta)\right)\leq\sum_i p_i\mathcal{F}
\left(\rho_i(\theta)\right)$~\cite{BCM96}.

The parameter $\theta$ is imprinted in the state, $\rho\in\cB(\cH)$, of a quantum system by a completely 
positive, trace-preserving (CPT) map, $\cE_\theta:\cB(\cH)\to\cB(\cH),\, \rho(\theta)=\cE_\theta\left(\rho\right)$. 
For different values of $\theta$, $\rho(\theta)$ traces a curve in the space of bounded operators. 
\begin{figure}[htbp]  
\includegraphics[keepaspectratio, width=7cm]{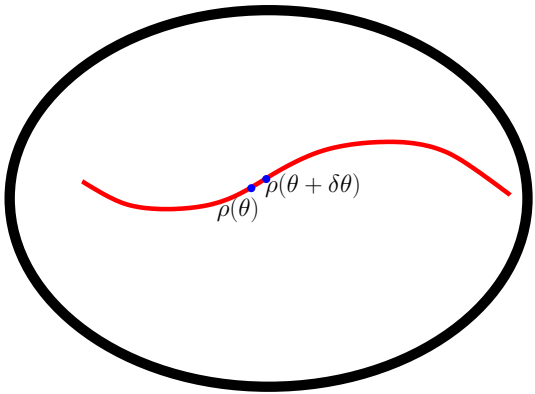}  
\caption{The curve traced by the set of states $\cE_\theta\left(\rho\right)$.  The steeper the gradient of the curve 
between $\theta$ and $\theta+\delta\theta$ the more distinguishable the states $\cE_\theta\left(\rho\right)$ and $
\cE_{\theta+\delta\theta}\left(\rho\right)$ become.}
\label{fig:1}
\end{figure}
Determining the value of $\theta$ is equivalent to distinguishing between $\rho(\theta)$ and 
$\rho(\theta+\delta\theta)$ (see Fig.~\ref{fig:1}). For the case where $\rho$ is 
pure, i.e.~$\rho=\ketbra{\psi}{\psi}$, and in the absence of noise, i.e.~$\cE_{\theta}=e^{i\theta H}\rho e^{-i\theta H}$, 
with $H$ the generator of shifts in $\theta$ (the Hamiltonian), $\mathcal{F}\left(\rho(\theta)\right)=4(\Delta H)^2$, where 
$(\Delta H)^2\equiv\langle H^2\rangle-\langle H\rangle^2$
is the variance of $H$~\cite{BCM96}.  

It follows that in order to obtain the best estimate of $\theta$ in the absence of noise one must 
use an initial pure state, $\ket{\psi}$, that has the largest variance with respect to $H$. It can be shown 
that for any Hamiltonian, $H$, acting on $N$ quantum systems $(\Delta H)^2$  is optimized by states of the 
form~\cite{GLM06, *Giovanetti:11,Maccone:13}
\begin{equation}
 \ket{\psi}=\sqrt{\frac{1}{2}}\left(\ket{\Lambda_{\mathrm{min}}}+e^{i\phi}\ket{\Lambda_{\mathrm{max}}}\right), 
 \label{eq:opt-state}
\end{equation}
with $\phi\in(0,2\pi]$ arbitrary, and where $\ket{\Lambda_{\mathrm{min}(\mathrm{max})}}$ is 
the eigenstate of $H$ corresponding to the minimum (maximum) eigenvalue.  

For the case of local Hamiltonians such as $H=\sum_i \sz^{(i)}$, where $\sz^{(i)}$ is the Hamiltonian acting on the 
$i^{\mathrm{th}}$ system, the states in Eq.~\eqref{eq:opt-state} correspond to GHZ (NOON)-like states in frequency 
spectroscopy and optical interferometry respectively, and achieve a precision 
\begin{equation}
\delta\theta^2\geq\frac{1}{\nu N^2},
\label{eq:HL}
\end{equation}
known as the Heisenberg limit~\cite{GLM04}.  We refer to these states as \defn{optimal states} for noiseless 
metrology.  Note that for the latter one can achieve the Heisenberg limit by 
employing a single system that undergoes $N$ sequential applications of the unitary 
$e^{i\theta \sz}$~\cite{GLM06,Rudolph:03, HBBWP07,*HBBMWP09, Maccone:13}.  
For local Hamiltonians the $N$-partite entangled 
states of Eq.~\eqref{eq:opt-state} offer a quadratic improvement, with respect to the number of probes used, over the 
best strategy employing separable states of $N$ probes achieved by $\ket{\phi_{\mathrm{PS}}}=\ket{+}^{\otimes N}$~\cite{GLM06,*Giovanetti:11}. Indeed, as the variance of any operator scales at most quadratically with it's spectral radius, the latter scaling linear with $N$ implies that the ultimate achievable precision limit scales at most inversely proportional to $N$, i.e., the Heisenberg limit.  

From the preceding discussion it is clear that a state is good for noiseless parameter estimation if the variance 
of the Hamiltonian with respect to this state scales as  $\mathcal{O}(N^\alpha)$ for $1<\alpha\leq 2$.  Hence, 
a hierarchy of resources for noiseless parameter estimation can be established with the optimal states, 
Eq.~\eqref{eq:opt-state}, being the ultimate resources~\footnote{In the presence of noise all sates for which $(\Delta H)^2=\mathcal{O}(N)$ are potentially interesting.}. Any state for which the variance of the Hamiltonian is 
$\mathcal{O}(N^2)$, i.e.~achieves Heisenberg scaling up to a constant factor independent of $N$, is 
desirable for noiseless parameter estimation.  Identifying such states may be important in cases where the 
ultimate resources are unavailable, or highly costly to prepare.  

For example, let $H=\sum_k \lambda_k\ketbra{\lambda_k}{\lambda_k}$ be the spectral decomposition of $H$ with spectral radius, 
$\varrho(H)=\mathcal{O}(N)$, and consider an arbitrary state 
\begin{equation}
\ket{\psi}=\sum_{k} c_k\ket{\lambda_k},
\label{eq:PG-state}
\end{equation}
where $c_k\in\mathbb{C}$. The variance of $H$ with respect to the state of Eq.~\eqref{eq:PG-state} can be trivially 
computed to be 
\begin{equation}
(\Delta H)^2=\sum_k |c_k|^2\lambda_k^2-\left(\sum_k |c_k|^2\lambda_k\right)^2.
\label{eq:PG-variance}
\end{equation}
Viewing the state as a classical probability distribution over the eigenvalues of $H$, we say a state is \defn{pretty 
good} for noiseless metrology if its probability distribution, weighted by the eigenvalues of $H$, has a variance 
of $\mathcal{O}(N^2)$.  For the remainder of this work we mainly restrict to Hamiltonians with a 
\defn{homogeneously gapped spectrum}, i.e.~where the ordered eigenvalues, $\lambda_k$, of $H$ satisfy 
$|\lambda_{k+1}-\lambda_k|=c\in\mathbb{R}, \; \forall k$.  A few examples of pretty good states for such Hamiltonians 
are shown in Fig.~\ref{fig:2}.  
\begin{figure}[ht]
\includegraphics[width=8cm]{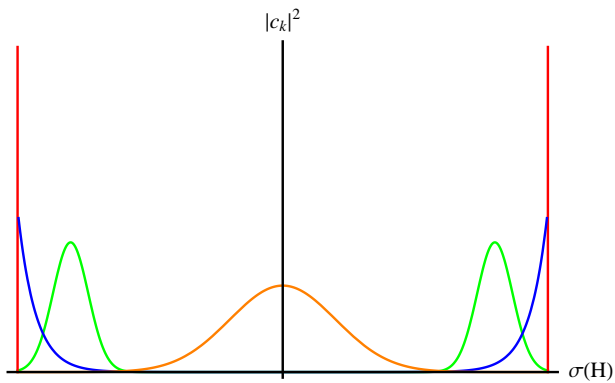}
\caption{Probability distributions of the coefficients, $|c_k|^2$, of the arbitrary state in Eq.~\eqref{eq:PG-state} over a 
Hamiltonian with equally gapped spectrum $\sigma(H)$. The horizontal axis represents the continuous limit of the discrete spectrum of $H$.  The red distribution gives the optimal variance 
$(\Delta H)^2=N^2$, whereas the orange distribution gives $(\Delta H)^2=N$.  The blue and green 
distributions have variance $(\Delta H)^2=\mathcal{O}(N^2)$ and correspond to pretty good states.}
 \label{fig:2}
\end{figure}

\subsection{Bounds for noisy metrology}
\label{sec:bounds}
As the QFI is convex it follows that in noiseless metrology no increase in 
precision is gained by preparing the $N$ probes in a mixed state. However, in the presence of 
decoherence the evolution of the $N$ probes is described by the CPT map, $\cE_{\theta}$, whose  Kraus 
decomposition contains more than a single Kraus operator~\cite{Kraus:83}.  Consequently,  
the state $\cE_\theta\left(\rho\right)\in\cB(\cH)$, where $\rho$ is the initial state of the $N$ probes,  will in general be a 
mixed state.   As Eq.~\eqref{eq:SLD} requires both the eigenvalues and eigenvectors of $\cE_\theta\left(\rho\right)$, 
computing the QFI for large $N$ becomes intractable. Due to this fact, much of the theoretical developments in noisy 
quantum metrology have focused on deriving upper bounds for the QFI.   

A promising route for placing an upper bound on the QFI utilizes channel 
extension~\cite{Fujiwara:08, Kolodynski:12,*Kolodynski:13} and channel purification based 
techniques~\cite{Escher:11, Escher:12} which we re-derive here for the 
sake of completeness. One can write the state of the $N$ probes after the noisy evolution as 
\begin{align}\nonumber
\rho(\theta)&=\mathrm{tr}_E\left[\tilde{U}^{(SE)}_\theta[\ketbra{\psi_S}{\psi_S}
\otimes(\ketbra{0}{0})_E]\tilde{U}^{(SE)\dagger}_\theta\right]\\
&\equiv\mathrm{tr}_E\left[\ketbra{\Psi(\theta)_{(SE)}}{\Psi(\theta)_{(SE)}}\right],
\label{eq:purification}
\end{align}
where the subscripts $S\, (E)$ refer to the system (environment) respectively, $\tilde{U}^{(SE)}_\theta$ is a 
unitary acting on both the system and the environment, and 
$\ket{\Psi(\theta)_{(SE)}}\equiv \tilde{U}^{(SE)}_\theta(\ket{\psi_S}\otimes\ket{0_E})$ is a purification of $\rho(\theta)$. 
Due to the partial trace over the environment in Eq.~\eqref{eq:purification}, the purification, $\ket{\Psi(\theta)_{(SE)}}$, of 
$\rho(\theta)$ is unique up to an isometry, $V^{(E)}_\theta$, acting on the $E$.
As one can gain more information about $\theta$ by measuring both $S$ and $E$ the QFI of $\ket{\Psi(\theta)_{(SE)}}$ provides an upper bound on the QFI of $\rho(\theta)$
\begin{align}\nonumber
\mathcal{F}\left(\rho(\theta)\right)\leq&\mathcal{F}\left(\ketbra{\Psi(\theta)_{(SE)}}{\Psi(\theta)_{(SE)}}\right)\\
&\equiv C_Q\left(\ketbra{\psi_S}{\psi_S},\,K_m(\theta)\right),
\label{eq:bound}
\end{align}
where $K_m(\theta)\equiv_E\bra{m}\tilde{U}^{(SE)}_\theta\ket{0}_E$ are the Kraus operators describing the action of 
the CPT map $\cE_\theta$.

The upper bound in Eq.~\eqref{eq:bound} involves determining the QFI of a pure state which, 
using Eq.~\eqref{eq:SLD}, can be easily determined to be~\cite{BC94} 
\begin{align}\nonumber
\mathcal{F}\left(\ketbra{\Psi(\theta)_{(SE)}}{\Psi(\theta)_{(SE)}}\right)&=4\left(\bracket{\Psi(\theta)_{(SE)}^\prime}
{\Psi(\theta)_{(SE)}^\prime}-\right.\\
&\left.\left|\bracket{\Psi(\theta)_{(SE)}^\prime}{\Psi(\theta)_{(SE)}}\right|^2\right),
\label{eq:QFI-purestate}
\end{align}
where $\ket{\Psi(\theta)_{(SE)}^\prime}\equiv\frac{\mathrm{d}\ket{\Psi(\theta)_{(SE)}}}{\mathrm{d}\theta}$. In terms of 
the Kraus operators, $K_m(\theta)$, Eq.~\eqref{eq:bound} reads~\cite{Escher:11} 
\begin{equation}
C_Q(\ketbra{\psi_S}{\psi_S},\,K_m(\theta))=4\left(\bra{\psi_S}A_1\ket{\psi_S}-
\left(\bra{\psi_S}A_2\ket{\psi_S}\right)^2\right),
\label{eq:escher-bound}
\end{equation}
where 
\begin{align}\nonumber
A_1&=\sum_m \frac{\mathrm{d}K_m(\theta)^\dagger}{\mathrm{d}\theta}\frac{\mathrm{d}K_m(\theta)}{\mathrm{d}\theta}
\\
A_2&=i\sum_m\frac{\mathrm{d}K_m(\theta)^\dagger}{\mathrm{d}\theta}K_m(\theta).
\label{eq:escher-operators}
\end{align}

That the upper bound of Eq.~\eqref{eq:bound} is attainable can be seen by noting that the Bures fidelity, 
$F\left(\rho(\theta),\rho(\theta+\delta\theta)\right)^2\equiv\left(\mathrm{tr}\sqrt{\rho(\theta)^{1/2}\rho(\theta+\delta\theta)\rho(\theta)^{1/2}}\right)^2$, between two adjacent density operators 
$\rho(\theta)$, $\rho(\theta+\delta\theta)$ (see Fig.~\ref{fig:1}) can be expanded to second order in $\delta\theta$ as 
\begin{equation}
F\left(\rho(\theta),\rho(\theta+\delta\theta)\right)^2=1-\frac{\delta\theta^2}{4}\,\mathcal{F}\left(\rho(\theta)\right)
+\mathcal{O}(\delta\theta^3).
\label{eq:bures-fidelity}
\end{equation}
In addition, Uhlmann's theorem states that $F\left(\rho(\theta),\rho(\theta+\delta\theta)\right)^2
=\max_{\left\{\ket{\Psi(\theta+\delta\theta)_{(SE)}}\right\}}|\bracket{\Phi(\theta)_{(SE)}}{\Psi(\theta+\delta\theta)_{(SE)}}|
^2$, where $\ket{\Phi(\theta)_{(SE)}}$ is a purification of $\rho(\theta)$, $\ket{\Psi(\theta+\delta\theta)_{(SE)}}$ a 
purification of $\rho(\theta+\delta\theta)$, and the maximization is over all purifications of $\rho(\theta+\delta\theta)$~
\cite{MikeIke}. Performing a Taylor expansion of $|\bracket{\Phi(\theta)_{(SE)}}{\Psi(\theta+\delta
\theta)_{(SE)}}|^2$ up to second order in $\delta\theta$ yields  
\begin{align}\nonumber
&|\bracket{\Phi(\theta)_{(SE)}}{\Psi(\theta+\delta\theta)_{(SE)}}|^2=1-\delta\theta^2\times\\
&\left(\bracket{\Psi(\theta)_{(SE)}^\prime}{\Psi(\theta)_{(SE)}^\prime}-\left|\bracket{\Psi(\theta)_{(SE)}^\prime}
{\Psi(\theta)_{(SE)}}\right|^2\right) +\mathcal{O}(\delta\theta^3).
\label{eq:fidelity}
\end{align}
Hence, up to second order in $\delta\theta$, the maximization over $\ket{\Psi(\theta+\delta\theta)_{(SE)}}$ required to 
compute the fidelity amounts to the minimization of the second term in Eq.~\eqref{eq:fidelity} over 
all $\ket{\Psi(\theta+\delta\theta)_{(SE)}}$.  From Eq.~\eqref{eq:fidelity} and Eq.~\eqref{eq:bures-fidelity} it follows 
that  
\begin{align}\nonumber
\mathcal{F}(\rho(\theta))&=4\min_{\left\{\ket{\Psi(\theta)_{(SE)}}\right\}}\left(\bracket{\Psi(\theta)_{(SE)}^\prime}
{\Psi(\theta)_{(SE)}^\prime}-\right.\\ \nonumber
&\left.\left|\bracket{\Psi(\theta)_{(SE)}^\prime}{\Psi(\theta)_{(SE)}}\right|^2\right)\\
&=\min_{\{K_m(\theta)\}}C_Q(\ketbra{\psi_S}{\psi_S},\,K_m(\theta)).
\label{eq:attainable}
\end{align}
Hence, the bound of Eq.~\eqref{eq:bound} is attained by minimizing over all possible Kraus decompositions of the 
CPT map, $\cE_\theta$, which is equivalent to optimizing over the unitary transformation $V^{(E)}_\theta$ on the 
environment in Eq.~\eqref{eq:purification}.

The above technique has been applied to quantum metrology in the presence of local, 
uncorrelated noise, such as dephasing, loss, and noise of full 
rank~\cite{Escher:11,Kolodynski:12,*Kolodynski:13}~\footnote{A full rank channel is a channel that lies in the interior of 
the space of quantum channels.}.  In the case of local unitary evolution and local uncorrelated 
dephasing noise, where the dephasing operators commute with the Hamiltonian, it was shown that the 
ultimate precision achievable using a GHZ or NOON state is given by
\begin{equation}
 \delta\theta^2\geq\frac{1}{\nu\eta^{2N}N^2},
 \label{eq:bound-dephasing}
\end{equation}
where $0\leq\eta<1$ denotes the strength of dephasing with $\eta=1$ meaning no dephasing at 
all~\cite{Kolodynski:13}.  Whereas for small $N$ GHZ and NOON states still exhibit precision inversely proportional to 
$N^2$, for large $N$ precision in phase estimation quickly decreases below the standard limit.   

The usefulness of optimal states completely disappears if one 
wishes to estimate frequency rather than phase.  In this case $\theta=\omega t$ and $\omega$ is the parameter to be estimated and  the resources are the 
number of probes used and the \defn{total running time} for the experiment, $T=\nu t$, where $t$ is the time for a 
single experimental  run and  $\nu$ are the number of repetitions.  Here, one not only needs to optimize over the 
measurements but also over the time, $t$, that these measurements need to be performed.  The variance in 
estimating frequency using a GHZ  state was shown to be~\cite{Huelga:97}
\begin{equation}
\delta\omega^2\geq\frac{2\gamma e}{NT},
\label{eq:bound-frequency-ps}
\end{equation}
with $\gamma$ the dephasing parameter.  Exactly the same precision is achieved if the $N$ probes are 
prepared in a separable state.  

However, for a particular measurement strategy commonly employed in Ramsey spectroscopy it was shown that~\cite{Huelga:97}
\begin{equation}
\delta\omega^2\geq\frac{2\gamma}{NT},
\label{eq:bound-frequency-sss}
\end{equation}
i.e.~only a factor of $e^{-1}$ improvement over the standard limit can be achieved.  Furthermore, the bound in 
Eq.~\eqref{eq:bound-frequency-sss} was shown to be asymptotically achievable by a spin-squeezed state 
with a particular squeezing parameter that decreases with $N$~\cite{OK01}. Note that this result provides a lower 
bound on the 
QFI as the measurement is fixed. Using the purification techniques discussed above Escher {\it et al.} derived an 
upper bound, equal to Eq.~\eqref{eq:bound-frequency-sss}, on the QFI proving that this is indeed the ultimate 
achievable precision~\cite{Escher:11}.  It is worth noting that recent work  proves that the use of quantum error-correcting codes can help suppress the decoherence effects and thus restore the Heisenberg limit in some noisy 
metrological scenarios~\cite{Dur:13,*Arad:13,*Kessler:13,*Ozeri:13}.  

\subsection{The $\mathfrak{su}(2)$ Lie algebra}
\label{sec:LieAlg}

In this sub-section we briefly review the $\mathfrak{su}(2)$ Lie algebra. This algebra is the familiar algebra for 
angular momentum in quantum mechanics.  Thus, we simply outline the key properties that will be useful for 
our purposes and refer the reader to~\cite{Sakurai:94} for further details.

Let $V$ be an $M$-dimensional vector space over the field $\mathfrak{F}$ equipped with an operation $[\cdot,
\cdot]:V\times V\to V$, the \defn{Lie bracket} or \defn{commutator}, and let $\{S_i\}_{i=1}^M\in V$ be a 
set of linearly independent vectors.  Then $\{S_i\}_{i=1}^M$ form a \defn{Lie algebra} if the 
following hold:
\begin{enumerate}
\item $[S_k,S_l]= \sum_m c_{klm} S_m$, where $c_{klm}\in \mathfrak{F}$ are the \defn{structure constants} of the 
algebra,
\item $[S_k,S_k]=0$ for all $S_k\in\{S_i\}_{i=1}^M$,
\item For any $S_k,S_l, S_m\in\{S_i\}_{i=1}^M,\:[S_k,[S_l,S_m]]+
[S_l,[S_m,S_k]]+[S_m,[S_k,S_l]]=0$.
\end{enumerate}

The $\mathfrak{su}(2)$ Lie algebra is a three-dimensional vector space whose basis elements, $\{S_i\}_{i=1}^3$, are 
the generators of the algebra. The  $\mathfrak{su}(2)$ structure constants are given by $c_{klm}=ic\epsilon_{klm}$, 
where $c\in\mathbb{R}$ and $\epsilon_{klm}$ is the Levi-Civita symbol.  For $c=0$ one obtains a trivial, 
three-dimensional, abelian algebra. For $c=2$ one possible set of generators for $\mathfrak{su}(2)$ are the Pauli 
matrices $\{\sx,\,\sy,\,\sz\}$.

For $\mathfrak{su}(2)$ the operator $\bm J^2=\sum_i S_i^2$, known as the \defn{Casimir invariant} 
of $\mathfrak{su}(2)$, obeys $[\bm J^2,A]=0,\: \forall A\in\mathfrak{su}(2)$.  If the elements 
$\{S_i\}_{i=1}^3$ represent Hermitian operators, acting on the Hilbert space of a quantum system, 
then a convenient basis for the Hilbert space is $\{\ket{j,m,\beta}\}$, where $j$ and $m$ label the eigenvalues of $\bm 
J^2$ and $S_z$ (which in our notation is denoted by $S_3$) respectively, and $\beta$ is a multiplicity label indicating 
the degeneracy of a given pair of labels $(j,m)$.  Then, 
\begin{align}\nonumber
 \bm J^2\ket{j,m,\beta}&=c^2\, j(j+1)\ket{j,m,\beta}\\
 S_3\ket{j,m,\beta}&=c\, m\ket{j,m,\beta},
 \label{eq:angular-momentum}
\end{align}
where $j$ is either an integer or half-odd integer, and $m$ can take any of the $2j+1$ values in the 
interval $-j\leq m\leq j$.  For a given $j$ one obtains all values of $m$ by starting from $m=\pm j$ and 
repeatedly applying the \defn{ladder operators}
\begin{equation}
 J^{(3)}_\pm:=\frac{S_1\pm iS_2}{\sqrt{2}},
 \label{eq:ladders}
\end{equation}
respectively. Here, and in the following, the notation $J^{(k)}_\pm$ denotes the ladder operators that raise (lower) the 
eigenstates of $S_k$ and are defined as $J^{(k)}_\pm\equiv\frac{(S_l\pm S_m)}{\sqrt{2}}$, where the indices 
$(klm)$ are cyclic permutations of $(123)$.  We remark that the set 
$\{S_k, J^{(k)}_\pm\}$ constitute another set of generators of $\mathfrak{su}(2)$ with
\begin{equation}
[S_k,J^{(k)}_\pm]=\pm c J^{(k)}_\pm, \:\: [J^{(k)}_+,J^{(k)}_-]=S_k.
\end{equation}
Note that $\bm J^2$ can also be written as $\bm J^2=S_k^2+\{J^{(k)}_+,J^{(k)}_-\}$, where $\{A,B\}=AB+BA$ 
is the \defn{anti-commutator}.  

We now show how noiseless parameter estimation, where the Hamiltonian is one of the generators of $\mathfrak{su}(2)$, can be re-phrased in purely Lie algebraic terms, and derive the pretty good states for noiseless quantum metrology. In Sec.~\ref{sec:noise} we investigate the performance of these pretty good states in the presence of noise.

\section{Lie algebraic formulation of noiseless quantum metrology}
\label{sec:LieFormulation}

In this section we formulate noiseless quantum metrology using the $\mathfrak{su}(2)$ Lie algebra, and provide a 
recipe for constructing pretty good states (Sec.~\ref{sec:PGS}).  In addition,  we derive a class of Hamiltonians for 
which our construction can be applied (Sec.~\ref{sec:Construction}). Specifically, we consider a unitary evolution given 
by $e^{i\theta H}$, and restrict ourselves to homogeneously gapped Hamiltonians.  

As shown in Sec.~\ref{sec:ParEst} the quantum Cram\'{e}r-Rao bound in this case is given by 
$\delta\theta^2\geq\frac{1}{4\nu(\Delta H)^2}$. The states that optimize the variance are linear superpositions of states corresponding to the maximum and minimum eigenvalues of $H$.  However, such states may be difficult to generate. Expressing the variance of $H$  as a function of the generators of 
$\mathfrak{su}(2)$, and using algebraic techniques, we construct states whose variance also scales quadratically with $N$.  Such states could be easier to prepare than the optimal states yet would still yield Heisenberg scaling in precision.  
For ease of exposition we simply state the important results in this section, and defer all proofs to 
Appendices~\ref{append2} and~\ref{append3}.

\subsection{Pretty good states for noiseless metrology}
\label{sec:PGS}

Let us assume that $H\equiv S_1\in\mathfrak{su}(2)$, and that there exist two more operators, $S_2,\,S_3$, such that 
$\{S_1,\,S_2,\,S_3\}$ are generators for $\mathfrak{su}(2)$.  In Sec.~\ref{sec:Construction} we will establish necessary 
conditions  for $\{H,\,S_2,\, S_3\}$ to be the generators of $\mathfrak{su}(2)$ for Hamiltonians with homogeneously 
gapped spectra.  Many of the Hamiltonians considered in the context of quantum metrology thus far are 
homogeneously gapped and thus satisfy these conditions (see Sec.~\ref{sec:examples}). 

The following theorem shows how to construct pretty good states for noiseless metrology. Starting from the ground 
state of one of the generators different from $H$ (say $S_3$), one simply applies the raising operator $J^{(3)}_+$ (see 
Eq.~\eqref{eq:ladders}) a sufficient number of times.

\begin{theorem}
Let $\{S_1,\,S_2,\,S_3\}$ be a set of generators for $\mathfrak{su}(2)$. Assume, without loss of generality, 
that $H\equiv S_1$, and let $\ket{\psi_\mathrm{min}}$ be an eigenstate of $S_3$ corresponding to the smallest 
eigenvalue.  Then the variance of $H$ with respect to the state 
\begin{equation}
\ket{\psi}=\sqrt{\frac{1}{\mathcal{N}}} J_+^{(3)k}\ket{\psi_\mathrm{min}},
\label{eq:thm1}
\end{equation} 
where $\mathcal{N}$ denotes the normalization constant, scales as half the 
\defn{spectral radius} of $\bm J^2$, $\varrho(\bm J^2)$, if $k=\lceil\frac{2j_\mathrm{max}+1}{2}\rceil$, where $j_
\mathrm{max}$ is related to the maximum eigenvalue of $\bm J^2$ via Eq.~\eqref{eq:angular-momentum}, and $\lceil\cdot\rceil$ is the ceiling function.
\label{thm1}
\end{theorem}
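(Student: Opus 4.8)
The plan is to recognize the state~\eqref{eq:thm1} as an angular-momentum eigenstate living in the top-spin block of $\bm J^2$, and then to reduce the variance of $H=S_1$ to expectation values of the Casimir $\bm J^2$ and of $S_3^2$. I assume $c>0$ (the abelian case $c=0$ is trivial, and $c<0$ is the mirror image obtained by swapping $J_+^{(3)}\leftrightarrow J_-^{(3)}$).

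First I would pin down $\ket{\psi_{\min}}$. By Eq.~\eqref{eq:angular-momentum} the eigenvalues of $S_3$ are $cm$ with $-j\le m\le j\le j_{\max}$, so the smallest one is $-c\,j_{\max}$ and its eigenspace is spanned by the lowest-weight vectors $\ket{j_{\max},-j_{\max},\beta}$; hence $\ket{\psi_{\min}}$ lies in the block $j=j_{\max}$ with weight $m=-j_{\max}$ (if the label $\beta$ is degenerate, $\ket{\psi_{\min}}$ may be an arbitrary superposition over $\beta$, which is harmless since $S_1,S_2,S_3,\bm J^2$ act trivially on the multiplicity space). Because $\bm J^2$ is the Casimir, $[\bm J^2,J_+^{(3)}]=\tfrac1{\sqrt2}\bigl([\bm J^2,S_1]+i[\bm J^2,S_2]\bigr)=0$, so $J_+^{(3)}$ cannot change $j$; by the standard action of the ladder operators $J_+^{(3)k}\ket{\psi_{\min}}$ is therefore proportional to a state with $j=j_{\max}$ and definite weight $m=-j_{\max}+k$, and it is nonzero for $k=\lceil\tfrac{2j_{\max}+1}{2}\rceil$ since this $k$ obeys $k\le 2j_{\max}$ (for $j_{\max}\ge\tfrac12$; $j_{\max}=0$ being degenerate). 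Thus $\ket{\psi}$ is, up to normalization, an $S_3$-eigenstate of weight $m=-j_{\max}+k$.

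Next I would compute the variance. Writing $S_1=\tfrac1{\sqrt2}(J_+^{(3)}+J_-^{(3)})$ and $S_2=\tfrac1{i\sqrt2}(J_+^{(3)}-J_-^{(3)})$, the operators $J_\pm^{(3)}$ shift the weight by $\pm1$, so on $\ket{\psi}$ one has $\langle S_1\rangle=\langle S_2\rangle=0$ and $(\Delta H)^2=\langle S_1^2\rangle$. Expanding $S_1^2$ and $S_2^2$ and discarding the weight-changing pieces $J_\pm^{(3)2}$ (vanishing expectation on a definite-weight state), both reduce to $\tfrac12\langle\{J_+^{(3)},J_-^{(3)}\}\rangle$, hence $\langle S_1^2\rangle=\langle S_2^2\rangle$. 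Combining this with the identity $\bm J^2=S_3^2+\{J_+^{(3)},J_-^{(3)}\}$ gives
\[
(\Delta H)^2=\langle S_1^2\rangle=\tfrac12\bigl(\langle\bm J^2\rangle-\langle S_3^2\rangle\bigr)=\tfrac12\bigl(c^2 j_{\max}(j_{\max}+1)-c^2 m^2\bigr).
\]

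Finally, the prescribed $k$ places $m=-j_{\max}+k$ at the middle of the weight ladder, so $|m|\le1$ (explicitly $m=1$ for integer $j_{\max}$ and $m=\tfrac12$ for half-integer $j_{\max}$). Since $\bm J^2$ is positive semidefinite, $\varrho(\bm J^2)=c^2 j_{\max}(j_{\max}+1)$, and therefore $(\Delta H)^2=\tfrac12\varrho(\bm J^2)-\mathcal{O}(c^2)$ with the correction bounded by the $N$-independent constant $c^2/2$; in the metrological applications of interest $\varrho(\bm J^2)$ grows (typically $\propto N^2$), so $(\Delta H)^2$ scales as $\tfrac12\varrho(\bm J^2)$. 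I expect the only genuinely delicate step to be the bookkeeping just sketched — certifying that $\ket{\psi_{\min}}$ is a lowest-weight vector and that $J_+^{(3)k}$ neither leaves the top-spin block nor annihilates the state for the given $k$ — after which the evaluation of the variance is a one-line ladder-operator computation.
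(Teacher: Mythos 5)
Your proof is correct and takes essentially the same route as the paper's: write $H=\tfrac{1}{\sqrt{2}}\bigl(J_+^{(3)}+J_-^{(3)}\bigr)$, use the fact that $J_+^{(3)k}\ket{\psi_\mathrm{min}}$ is a definite-weight eigenstate of $S_3$ to eliminate the weight-changing terms, invoke $\bm J^2=S_3^2+\{J_+^{(3)},J_-^{(3)}\}$, and choose $k$ so that the residual $S_3^2$ contribution is an $N$-independent constant. Your extra bookkeeping (lowest-weight identification of $\ket{\psi_\mathrm{min}}$, the multiplicity label, and non-annihilation under $J_+^{(3)k}$) is a tightening of details the paper's proof leaves implicit.
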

The proof of Theorem~\ref{thm1} can be found in Appendix~\ref{append2}.

In order to achieve Heisenberg scaling we require that $\varrho(\bm J^2)=\mathcal{O}(N^2)$.  
As $\varrho(\bm J^2)\geq \varrho(S_1^2)$, it is sufficient that $\varrho(S_1)\propto N$. 
Recall that we consider here, as in all other realistic quantum metrology scenarios, that $H$ is the sum of $N$ 
$k$-local Hamiltonians, i.e., $H=\sum_{i=1}^{N-k} h^{(i)}$, with $h^{(i)}$ a $k$-nearest neighbour Hamiltonian where $k$ 
is independent of $N$.  
In this case $\varrho=\mathcal{O}(N)$, and the QFI scales as $\mathcal{O}(N^2)$.

For almost all Hamiltonians considered so far in quantum metrology, this is indeed the case.  

We now show that for the case where $H\equiv 1/2\, S_z=1/2\sum_i\sigma_z^{(i)}$,  the states of 
Theorem~\ref{thm1} are the well-known Dicke states in the $x$-basis~\cite{Dicke:54}. Indeed, the set of 
operators $\{S_x, S_y,S_z\}$, with $S_y$ defined 
similar to $S_z,\,S_x$, are the generators of the $\mathfrak{su}(2)$ Lie algebra. 
It is easy to show that $J^{(x)}_+=\frac{1}{\sqrt{2}}(S_y+iS_z)$, and that $\ket{\psi_{\mathrm{min}}}$ of $S_x$ is given 
by $\ket{-}^{\otimes N}$.  Applying $J^{(x)}_+$, $\lfloor \frac{N}{2}\rfloor$ times to the state $\ket{-}^{\otimes N}$ gives
\begin{align}\nonumber
\ket{\psi}&=\sqrt{\frac{1}{\binom{N}{\left\lfloor\frac{N}{2}\right\rfloor}}}\sum_{\pi\in S_N}\ket{-_{\pi(1)}}
\ket{+_{\pi(2)}}\ket{-_{\pi(3)}}\ldots\ket{+_{\pi(N)}}\\
&\equiv\ket{N,\left\lfloor\frac{N}{2}\right\rfloor}_x
\label{eq:dickestate}
\end{align}
where $\lfloor\cdot\rfloor$ denotes the floor function,  $\pi\in S_N$ denotes an element of the permutation 
group of $N$ objects, and the sum in Eq.~\eqref{eq:dickestate} runs over all permutations. The state in 
Eq.~\eqref{eq:dickestate} is the Dicke 
state of $N$ systems, $\lfloor \frac{N}{2}\rfloor$ of which are in the $+1$ eigenstate of $\sigma_x$ and the rest are in 
the $-1$ eigenstate. The variance of this state is given by $\frac{N/2(N/2+1)}{2}$ for $N$ even and 
$\frac{(N/2(N/2+1)-1/2)}{2}$ if $N$ is odd. Hence, the state $\ket{N,\left\lfloor\frac{N}{2}\right\rfloor}_x$ is a pretty good 
state.

Note that if one were to choose a different set of generators, say 
$W_2=\alpha S_2+\beta S_3, \, W_3=-\beta S_2+\alpha S_3$, then the pretty good states obtained with 
$W_2,\,W_3$ differ from those obtained from $S_2,\, S_3$ only in the relative phases of the coefficients of the state, 
expanded in the eigenbasis of $H$. This is due to the fact that the generators of $\mathfrak{su}(2)$ form a basis for a 
three-dimensional vector space, with the basis $\{H, W_2,\, W_3\}$ obtained from $\{H, S_2,\, S_3\}$ by an 
appropriate rotation about the vector corresponding to the generator $H$. 

As we discuss at the end of the next section, our procedure for constructing pretty good states also applies for more general Hamiltonians.

\subsection{Constructing $\mathfrak{su}(2)$ from the Hamiltonian}
\label{sec:Construction}

We now determine necessary conditions for a homogeneously gapped Hamiltonian to be an element of 
$\mathfrak{su}(2)$ and provide a prescription for how, given the Hamiltonian, one can construct the two remaining 
generators of $\mathfrak{su}(2)$.  At the end of this section we discuss how a similar construction can be applied to 
block diagonal Hamiltonians, where each of the blocks is homogeneously gapped.

Let $S_1$ be our Hamiltonian which, using the spectral decomposition, can be written as 
\begin{align}\nonumber
S_1&=\sum_{k=1}^n\sum_{i=1}^{d_k} \lambda_k\ketbra{k,i}{k,i}\\
&\equiv\sum_{k=1}^n \lambda_k\ketbra{k}{k}\otimes \one_{d_k},
\label{eq:spectraldecomp}
\end{align}
where $\lambda_k$ are the eigenvalues, $\ket{k,i}$ the corresponding eigenvectors, and $\{\ket{k,i}\}_{i=1}^{d_k}$ form an orthonormal basis of a subspace whose dimension, $d_k$, corresponds to the multiplicity of the 
$k^{\mathrm{th}}$ eigenvalue.  Without loss of generality we may order the eigenvalues of $S_1$ in decreasing 
order such that $\lambda_{k}>\lambda_{k+1},\,\forall k$ and shift the entire spectrum of $H$ such that $-\lambda_k=\lambda_{n-k+1}$.  We seek two Hermitian operators, $S_2,\,S_3$, such 
that 
\begin{align}\nonumber
[S_2,S_3]&=i c S_1\\  \nonumber
[S_3, S_1]&=i c S_2\\
[S_1,S_2]&=i c S_3,
\label{eq:Lie-algebra}
\end{align}
where $c>0$. The following lemma, whose proof can be found in Appendix~\ref{append3}, establishes the form the operators $S_2,\, S_3$ must take.
\begin{lemma}
Let $S_1$ be given as in Eq.~\eqref{eq:spectraldecomp} and let $S_2,\, S_3$ be two Hermitian operators. 
If the spectrum of $S_1$ is homogeneously gapped, i.e.~$|\lambda_{k+1}-\lambda_{k}|=c,\,\forall k$, and the 
conditions in Eq.~\eqref{eq:Lie-algebra} hold then
\begin{align}\nonumber
S_2&=\sum_{k=1}^n\ket{k+1}\bra{k}\otimes S_2^{(k,k+1)}+\ket{k}\bra{k+1}\otimes S_2^{(k+1,k)}\\
S_3&=\sum_{k=1}^n\ket{k+1}\bra{k}\otimes S_3^{(k,k+1)}+\ket{k}\bra{k+1}\otimes S_3^{(k+1,k)},
\end{align}
where $S_2^{(k,l)}$ ($S_3^{(k,l)}$) are $d_l\times d_k$ matrices, and $S_2^{(k,k+1)}=-i S_3^{(k,k+1)},\,\forall k$.
\label{lem1}
\end{lemma}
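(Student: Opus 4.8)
The plan is to exploit the commutation relations in Eq.~\eqref{eq:Lie-algebra} directly in the eigenbasis of $S_1$. First I would write $S_2$ and $S_3$ in block form with respect to the eigenspace decomposition of $S_1$, i.e.\ $S_2=\sum_{k,l}\ket{k}\bra{l}\otimes S_2^{(k,l)}$ with $S_2^{(k,l)}$ a $d_k\times d_l$ matrix (and similarly for $S_3$), and then impose the relation $[S_3,S_1]=icS_2$. Since $S_1$ acts as the scalar $\lambda_l$ on the $l$-th block, the commutator $[S_3,S_1]$ has $(k,l)$-block equal to $(\lambda_l-\lambda_k)S_3^{(k,l)}$. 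Setting this equal to $icS_2^{(k,l)}$ gives the key relation
\begin{equation}
(\lambda_l-\lambda_k)\,S_3^{(k,l)}=ic\,S_2^{(k,l)} .
\label{eq:prop-step1}
\end{equation}
The identical manipulation applied to $[S_1,S_2]=icS_3$ yields $(\lambda_k-\lambda_l)S_2^{(k,l)}=icS_3^{(k,l)}$.

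Combining these two equations, I would substitute one into the other to obtain $(\lambda_l-\lambda_k)^2 S_3^{(k,l)}=c^2 S_3^{(k,l)}$, so that $S_3^{(k,l)}$ (and hence $S_2^{(k,l)}$) must vanish unless $(\lambda_l-\lambda_k)^2=c^2$, i.e.\ unless $\lambda_l-\lambda_k=\pm c$. Here is where the homogeneous-gap hypothesis does its work: since the eigenvalues are ordered with $\lambda_k-\lambda_{k+1}=c$ for all $k$, the only index pairs with $|\lambda_l-\lambda_k|=c$ are the nearest-neighbour pairs $(k,k+1)$ and $(k+1,k)$. Every other block of $S_2$ and $S_3$ is therefore zero, which is precisely the claimed band structure. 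Writing $S_2^{(k,k+1)}$ for the surviving block above the diagonal and $S_2^{(k+1,k)}$ for the one below (and likewise for $S_3$), and remembering that Hermiticity forces $S_2^{(k+1,k)}=\bigl(S_2^{(k,k+1)}\bigr)^\dagger$, gives the displayed expressions for $S_2$ and $S_3$.

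It remains to pin down the phase relation $S_2^{(k,k+1)}=-iS_3^{(k,k+1)}$. For the pair $(k,k+1)$ we have $\lambda_{k+1}-\lambda_k=-c$, so Eq.~\eqref{eq:prop-step1} reads $-c\,S_3^{(k,k+1)}=ic\,S_2^{(k,k+1)}$, i.e.\ $S_2^{(k,k+1)}=-\tfrac{1}{i}S_3^{(k,k+1)}\cdot(-1)$, which after simplification is exactly $S_2^{(k,k+1)}=-iS_3^{(k,k+1)}$; the companion equation from $[S_1,S_2]=icS_3$ is then automatically consistent. (One should double-check the sign convention against the shift $-\lambda_k=\lambda_{n-k+1}$ imposed on the spectrum, but that only fixes an overall labelling and does not affect the argument.) The third commutator $[S_2,S_3]=icS_1$ has not yet been used; I would remark that it is not needed to establish the \emph{form} of $S_2,S_3$ — it becomes a constraint on the blocks $S_2^{(k,k+1)}$ themselves — and so it is deferred to the construction in the body of Appendix~\ref{append3}.

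The main obstacle I anticipate is purely bookkeeping: keeping the $d_k\times d_{k+1}$ rectangular block sizes straight, getting the Hermitian-conjugate off-diagonal blocks in the right place, and tracking the factors of $i$ and the sign of $c$ through the three commutators consistently with the chosen spectral ordering. There is no deep difficulty — the homogeneous gap collapses the problem to a tridiagonal one in a single line — but the proof is easy to garble if the index conventions are not fixed at the outset.
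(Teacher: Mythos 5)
Your proposal is correct in substance and follows essentially the same route as the paper: expand $S_2,S_3$ in blocks over the eigenspaces of $S_1$, use the commutators with $S_1$ block-wise, and let the homogeneous gap kill every block except the nearest-neighbour ones (the paper packages the two first-order relations into the single double-commutator identity $c^2S_2=H^2S_2+S_2H^2-2HS_2H$, but that is the same computation). The only defect is the final sign step: with your convention ($S_2^{(k,l)}$ attached to $\ket{k}\bra{l}$), your Eq.~\eqref{eq:prop-step1} gives $S_2^{(k,k+1)}=+i\,S_3^{(k,k+1)}$ for the block at $\ket{k}\bra{k+1}$, and the extra factor $(-1)$ you insert to reach $-i$ is unjustified as written; the resolution is that the lemma's $S_2^{(k,k+1)}$ is the block at $\ket{k+1}\bra{k}$ (it is $d_{k+1}\times d_k$), i.e.\ your $S_2^{(k+1,k)}$, for which your own relation with $\lambda_k-\lambda_{k+1}=c$ does give $-i\,S_3^{(k+1,k)}$, consistent (by Hermitian conjugation) with what you found for the other off-diagonal. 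Once the index conventions are aligned, the argument is complete and matches the paper's.
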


It now remains to determine the form of the matrices $S_2^{(k+1,k)}, S_3^{(k+1,k)}$.  The following theorem establishes necessary conditions on the multiplicities, $d_k$, of $S_1$ in order for $\{S_1,S_2, S_3\}$ to be the generators of $\mathfrak{su}(2)$.   In addition, Theorem~\ref{thm2} provides one possible solution for the operators $S_2, \,S_3$. The proof of Theorem~\ref{thm2} can be found in Appendix~\ref{append3}.

\begin{theorem}
Let $S_1$ be given by Eq.~\eqref{eq:spectraldecomp} with the eigenvalues of $S_1$ satisfying 
$\lambda_{k}-\lambda_{k+1}=c, \,\forall k\in(1,\ldots, n)$. 
In addition, let the operators $S_2,\, S_3$ be given as in Lemma~\ref{lem1}. Necessary conditions for 
Eq.~\eqref{eq:Lie-algebra} to hold are that $d_{k+1}\geq d_{k}$ for $1\leq k\leq\lfloor\frac{n}{2}\rfloor$, 
and  $d_{k}=d_{n+1-k}$.  Furthermore, one possible solution for the matrices $S_3^{(k,k+1)}$ is given by the $d_{k+1}\times d_k$ matrix 
\begin{align}\nonumber
S_3^{(k,k+1)}&=\sqrt{\frac{c}{2}}\,\mathrm{diag}\left(\underbrace{\sqrt{\sum_{i=1}^k\lambda_i}}_{d_1\,\mathrm{times}},\underbrace{\sqrt{\sum_{i=2}^k\lambda_i}}_{(d_2-d_1)\,\mathrm{times}},\ldots,\right.\\
& \left.\underbrace{\sqrt{\sum_{i=k-1}^k\lambda_i}}_{(d_{k-1}-d_{k-2})\,\mathrm{times}},\underbrace{\sqrt{\lambda_k}}_{(d_k-d_{k-1})\,\mathrm{times}}\right).
\label{eq:thm2}
\end{align}
\label{thm2}
\end{theorem}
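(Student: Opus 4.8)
The plan is to work entirely within the block-tridiagonal structure supplied by Lemma~\ref{lem1} and to extract the constraints on the multiplicities $d_k$ by evaluating the diagonal-block parts of the three commutation relations in Eq.~\eqref{eq:Lie-algebra}. Write $S_2^{(k+1,k)} = -iS_3^{(k+1,k)} =: -iB_k$, so that both $S_2$ and $S_3$ are determined by the single family of $d_{k+1}\times d_k$ matrices $B_k$. Substituting these into $[S_3,S_1]=icS_2$ is automatic given the homogeneous gap (this is essentially how Lemma~\ref{lem1} was derived), so the real content lies in $[S_2,S_3]=icS_1$. First I would compute $S_2S_3$ and $S_3S_2$ block by block: the $(k,k)$ diagonal block of $[S_2,S_3]$ turns out to be proportional to $B_{k-1}^\dagger B_{k-1} - B_k B_k^\dagger$ (with the convention $B_0 = B_n = 0$ at the spectrum edges), and setting this equal to $ic\cdot\lambda_k\one_{d_k}$ — i.e.\ to $c\lambda_k\one_{d_k}$ after tracking the factor of $i$ absorbed into $S_2$ — gives the recursion
\begin{equation}
B_{k-1}^\dagger B_{k-1} - B_k B_k^\dagger = c\,\lambda_k\,\one_{d_k}, \qquad k=1,\ldots,n .
\label{eq:recursion}
\end{equation}

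From here the multiplicity constraints follow by linear algebra. Since $B_kB_k^\dagger$ is a $d_k\times d_k$ positive semidefinite matrix of rank at most $d_{k+1}$, and $B_{k-1}^\dagger B_{k-1}$ is $d_k\times d_k$ of rank at most $d_{k-1}$, Eq.~\eqref{eq:recursion} constrains how these ranks can combine. For $k$ in the lower half of the spectrum $\lambda_k>0$, so the right-hand side is positive definite; taking traces in Eq.~\eqref{eq:recursion} and summing telescopically, or inspecting kernels, forces $d_{k+1}\geq d_k$ for $1\leq k\leq\lfloor n/2\rfloor$. The symmetry $d_k = d_{n+1-k}$ I would get from the spectral reflection $-\lambda_k=\lambda_{n-k+1}$ imposed on $S_1$: the relation $-S_1$ is unitarily equivalent to $S_1$ via a reflection of the eigenvalue labels, and this symmetry, combined with uniqueness of the block structure, pairs $d_k$ with $d_{n+1-k}$. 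Summing Eq.~\eqref{eq:recursion} from $k=1$ up to a given index and using $B_0=0$ gives $B_k B_k^\dagger = -c\sum_{i=1}^k\lambda_i\,\one + (\text{contributions from the rank jumps})$, which is exactly the structure that produces the nested radical pattern in Eq.~\eqref{eq:thm2}: on the subspace of dimension $d_1$ that ``survives'' from the very bottom block the eigenvalue is $\frac{c}{2}\sum_{i=1}^k\lambda_i$, on the next shell of dimension $d_2-d_1$ it is $\frac{c}{2}\sum_{i=2}^k\lambda_i$, and so on, with the outermost shell of dimension $d_k-d_{k-1}$ carrying eigenvalue $\frac{c}{2}\lambda_k$. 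Choosing $B_k = S_3^{(k,k+1)\dagger}$ to be the diagonal matrix with these square-root entries then solves Eq.~\eqref{eq:recursion} by construction.

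Finally I would close the loop by verifying that this explicit choice actually satisfies \emph{all} of Eq.~\eqref{eq:Lie-algebra}, not just the diagonal blocks of one commutator: the off-diagonal $(k,k+2)$ blocks of $[S_2,S_3]$ must vanish, which holds because $S_2$ and $S_3$ are built from the \emph{same} $B_k$ (the $\pm i$ factors conspire to cancel the $(k,k+2)$ terms), and the remaining two relations follow from the homogeneous gap exactly as in Lemma~\ref{lem1}. I expect the main obstacle to be the multiplicity inequality $d_{k+1}\geq d_k$: it is not merely a trace count but a statement about how the kernels of $B_{k-1}^\dagger B_{k-1}$ and the ranges of $B_kB_k^\dagger$ interlock through Eq.~\eqref{eq:recursion}, so one has to argue carefully — probably by an induction moving inward from the spectrum edge, tracking at each step the dimension of the subspace on which the accumulated operator $\sum_{i}\lambda_i$ acts as a scalar — that a deficiency $d_{k+1}<d_k$ would force a negative eigenvalue on the left-hand side of Eq.~\eqref{eq:recursion} where the right-hand side is positive, a contradiction. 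The explicit formula Eq.~\eqref{eq:thm2} is then best viewed as the canonical ``minimal'' solution consistent with these forced multiplicities.
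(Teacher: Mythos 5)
Your overall route is the same as the paper's: the relations $[S_3,S_1]=icS_2$ and $[S_1,S_2]=icS_3$ hold automatically for the tridiagonal ansatz of Lemma~\ref{lem1} with a homogeneous gap, so the entire content sits in the diagonal blocks of $[S_2,S_3]=icS_1$, which give a telescoping chain of matrix equations in the products $S_3^{(k+1,k)}S_3^{(k,k+1)}$; summing from the edge of the spectrum and using that each product is a positive semidefinite $d_k\times d_k$ matrix of rank at most $\min(d_k,d_{k+1})$, while the accumulated right-hand side is strictly positive definite for $k\le\lfloor n/2\rfloor$, yields $d_{k+1}\ge d_k$ and exactly the shell structure of Eq.~\eqref{eq:thm2} (this is the argument of Appendix~\ref{append3}, with your diagonal choice of $B_k$ playing the role of the singular-value matrices $D_3^{(k,k+1)}$ with the intertwining unitaries set to $\one$). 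Two bookkeeping slips should be fixed before they propagate: your recursion as written has the wrong sign --- at $k=1$ with $B_0=0$ it reads $-B_1B_1^\dagger=c\lambda_1\one_{d_1}$, impossible for $\lambda_1>0$ --- and it misses the factor $1/2$ coming from the fact that each product appears twice in the diagonal blocks of the commutator; the correct chain is $S_3^{(k+1,k)}S_3^{(k,k+1)}-S_3^{(k-1,k)}S_3^{(k,k-1)}=\tfrac{c\lambda_k}{2}\one_{d_k}$, after which your telescoped eigenvalues $\tfrac{c}{2}\sum_{i=j}^k\lambda_i$ are consistent.

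The genuine gap is your derivation of $d_k=d_{n+1-k}$. You obtain it from ``the spectral reflection $-\lambda_k=\lambda_{n-k+1}$ imposed on $S_1$,'' claiming that $-S_1$ is unitarily equivalent to $S_1$ via a relabeling of eigenvalues. But symmetry of the eigenvalue \emph{list} says nothing about multiplicities: $-S_1\cong S_1$ holds if and only if $d_k=d_{n+1-k}$, which is precisely the statement to be proven, so the argument is circular. Two non-circular repairs exist. The paper's: continue the chain past the midpoint, where the partial sums begin to decrease because $\lambda_{n/2+1}=-\lambda_{n/2}$; requiring each product to remain positive semidefinite, together with the terminating equation $-S_3^{(n-1,n)}S_3^{(n,n-1)}=\tfrac{c\lambda_n}{2}\one_{d_n}$, forces first $d_{n/2+1}=d_{n/2}$ and then recursively $d_k=d_{n+1-k}$ --- note this upper-half analysis is not optional, since it is where the boundary condition at $k=n$ is actually used and where your explicit solution gets verified beyond $k=\lfloor n/2\rfloor$. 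Alternatively, one can argue from the assumed algebra rather than the spectrum: conjugation by $e^{i\pi S_3/c}$ maps $S_1\mapsto -S_1$ whenever Eq.~\eqref{eq:Lie-algebra} holds, which legitimately gives the unitary equivalence you wanted; but that is a consequence of the $\mathfrak{su}(2)$ hypothesis, not of the eigenvalue symmetry, and it is not what your text invokes.
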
 

The proof uses the fact that $[S_2,S_3]=i\, c\, H$, and the form of $S_2$ and $S_3$ given in 
Lemma~\ref{lem1} to establish a set of $n$ equations involving $n$ unknown operators. In order for the 
system of $n$ equations to be solvable, it is necessary that $d_{k+1}\geq d_k$ for 
$1\leq k\leq\lfloor\frac{n}{2}\rfloor$ and $d_{k}=d_{n+1-k}$.  One 
valid solution for the operators $S_3^{(k,k+1)}$ is the $d_{k+1}\times d_k$ matrix whose main diagonal consists of the 
singular values of  $S_3^{(k,k+1)}$ and the rest of the elements are zero.  
From the relation between $S_3$ and $S_2$ 
given in Lemma~\ref{lem1} a similar solution can be constructed for $S_2$.

Homogeneously gapped Hamiltonians form only a subclass of operators that belong to $\mathfrak{su}(2)$.
Indeed, consider the block diagonal operator $H=\bigoplus_{m} H_m$, with 
\begin{align}\nonumber
H_m&=\sum_{k=1}^{d_m} \sum_{i=1}^{d_{k_m}}(\lambda_m-k\,c)\ket{k_m,i}\bra{k_m,i}\\
&\equiv\sum_{k=1}^{d_m} (\lambda_m-k\,c)\ket{k_m}\bra{k_m}\otimes\one_{d_{k_m}},
\label{eq:blockdiagonalsu2}
\end{align}
where $d_m$ is the dimension of the subspace upon which $H_m$ acts, $\lambda_m$ is largest eigenvalue in the 
homogeneously gapped spectrum of $H_m$, and $\ket{k_m, i}$ the corresponding eigenvectors.  The operator $H$ is 
not homogeneously gapped as for any two blocks, $m, n, \, \lambda_m-\lambda_n$ can be arbitrary.  However, as 
each block $H_m$ is homogeneously gapped, one can use Lemma~\ref{lem1} and Theorem~\ref{thm2} above to 
construct Hermitian operators $S_{2,m},\, S_{3,m}$ such that $\{H_m, \, S_{2,m}, S_{3,m}\}$ are the generators of 
$\mathfrak{su}(2)$ acting on the appropriate $d_m$-dimensional subspace.  Consequently, the operators $\{H,\, S_2=
\bigoplus_m S_{2,m}, \, S_3=\bigoplus S_{3,m}\}$ are the generators of $\mathfrak{su}(2)$ on the entire Hilbert space.
Constructing pretty good states for such Hamiltonians is also possible so long as at least one block has dimension  
$d_m\propto N$. 

Homogeneously gapped Hamiltonians form an important subclass of operators as it includes, but is not limited to, 
almost all local Hamiltonians 
studied in parameter estimation to date, as well as nearest-neighbor Hamiltonians that appear in interacting one-dimensional systems, graph state Hamiltonians~\cite{RJ09}, 
as well as Hamiltonians used in topological quantum computing~\cite{Freedman:03}.   
In the next section we illustrate how Theorems~\ref{thm1}, and~\ref{thm2} can be used to 
construct the $\mathfrak{su}(2)$ Lie algebra for a 
few of the afore mentioned Hamiltonians, and we also determine the states that yield a Heisenberg 
scaling in precision for noiseless quantum metrology.

\section{Examples of $\mathfrak{su}(2)$ Hamiltonians}
\label{sec:examples}

In the previous section we showed how one can construct the requisite Lie algebra from a homogeneously gapped  
Hamiltonian.  In this section we illustrate 
how the construction of Sec.~\ref{sec:Construction} works for four such 
Hamiltonians, the single body Hamiltonian, $H=1/2\sum_{i=1}^N \sz^{(i)}$ (Sec.~\ref{sec:local-Hamiltonian}), the 1-d 
cluster state Hamiltonian $H=\sum_{i=1}^N\sz^{(i-1)}\sx^{(i)}\sz^{(i+1)}$ (Sec.~\ref{sec:cluster}),  the
nearest-neighbor Hamiltonian, $H=\sum_{i=1}^{N-1}\sz^{(i)}\sz^{(i+1)}$ 
(Sec.~\ref{sec:nearest-neighbour-Hamiltonian}) and the Hamiltonian, 
$H=\sum_{i=1}^{N-1}\sy^{(i)}\sy^{(i+1)}+\sx^{\otimes N}+\sz^{\otimes N}$ (Sec.~\ref{sec:non-local-Hamiltonian}).
\vspace{-3mm}
\subsection{Local Hamiltonian}
\label{sec:local-Hamiltonian}

One of the most frequently used Hamiltonians in quantum metrology is the local Hamiltonian~\cite{GLM04,GLM06, Huelga:97, Escher:11, Kolodynski:12,*Kolodynski:13} 
\begin{equation}
 H=\frac{1}{2}\sum_{i=1}^N\sz^{(i)},
 \label{eq:local-Hamiltonian}
\end{equation}
whose spectrum and multiplicities are given by
\begin{align}\nonumber
\sigma(H)&=\left\{\lambda_x=\frac{N}{2}-x;\;\; d_x=\binom{N}{x},\;\; x\in(0,\ldots, N)\right\}.
\end{align}
Such a Hamiltonian frequently appears in the estimation of local field~\cite{Fleischhauer:00, Budker:07, Aiello:13}.

For ease of exposition we illustrate our construction for  $N=5$.  Using Theorem~\ref{thm2} the matrices $S_3^{(k,k+1)}$ are given by
\begin{align}\nonumber
&S_3^{(1,2)}=\begin{pmatrix}\sqrt{\frac{5}{4}}&0&0&0&0\end{pmatrix}\\ \nonumber
&S_3^{(2,3)}=\begin{pmatrix}\sqrt{\frac{8}{4}}&0&0&0&0&0&0&0&0&0\\
			    0&\sqrt{\frac{3}{4}}&0&0&0&0&0&0&0&0\\
			    0&0&\sqrt{\frac{3}{4}}&0&0&0&0&0&0&0\\
			    0&0&0&\sqrt{\frac{3}{4}}&0&0&0&0&0&0\\
			    0&0&0&0&\sqrt{\frac{3}{4}}&0&0&0&0&0
             \end{pmatrix}\\ \nonumber
&S_3^{(3,4)}=\begin{pmatrix}\frac{3}{2}&0&0&0&0&0&0&0&0&0\\
			    0&1&0&0&0&0&0&0&0&0\\
			    0&0&1&0&0&0&0&0&0&0\\
			    0&0&0&1&0&0&0&0&0&0\\
			    0&0&0&0&1&0&0&0&0&0\\
			    0&0&0&0&0&\frac{1}{2}&0&0&0&0\\
			    0&0&0&0&0&0&\frac{1}{2}&0&0&0\\
			    0&0&0&0&0&0&0&\frac{1}{2}&0&0\\
			    0&0&0&0&0&0&0&0&\frac{1}{2}&0\\
			    0&0&0&0&0&0&0&0&0&\frac{1}{2}
             \end{pmatrix}\\ \nonumber
&S_3^{(4,5)}=S_3^{(2,3)\dagger}\\
&S_3^{(5,6)}=S_3^{(1,2)\dagger}.
\label{eq:local-algebra}
\end{align}
Recalling that $S_2^{(k,k+1)}=-i S_3^{(k,k+1)}$, generator $S_2$ can easily be determined from 
Eq.~\eqref{eq:local-algebra}. Defining the ladder operators for $S_2$ as in Eq.~\eqref{eq:ladders} the eigenstate 
corresponding to the minimum eigenvalue of $S_2$ is, 
\begin{align}\nonumber
\ket{\psi_{\min}}&=\frac{-i}{4\sqrt{2}}\ket{\psi_{00000}}+\sqrt{\frac{5}{32}}\ket{\psi_{00001}}+\frac{i\sqrt{5}}{4}\ket{\psi_{00111}}\\
&-\frac{\sqrt{5}}{4}\ket{\psi_{10001}}-i\sqrt{\frac{5}{32}}\ket{\psi_{11011}}+\frac{1}{4\sqrt{2}}\ket{\psi_{11111}},
\label{eq:s2_min_eigenstate}
\end{align}
where $\{\ket{\psi_{\j}}\}$ denotes the basis in which $H$ is a diagonal matrix with its eigenvalues ordered from highest 
to lowest. For eigenvalues that are degenerate, we choose without loss of generality one eigenstate from the corresponding eigenspace. Raising the state in Eq.~\eqref{eq:s2_min_eigenstate} twice using $J^{(2)}_+$, yields the pretty good state
\begin{align}\nonumber
\ket{\psi_{PG}}&=\frac{i\sqrt{5}}{4}\ket{\psi_{00000}}-\frac{1}{4}\ket{\psi_{00001}}+\frac{i}{2\sqrt{2}}\ket{\psi_{00111}}\\
&-\frac{1}{2\sqrt{2}}\ket{\psi_{10001}}+\frac{i}{4}\ket{\psi_{11011}}-\frac{\sqrt{5}}{4}\ket{\psi_{11111}},
\label{eq:pg_state_local}
\end{align}
whose variance is $17/4$. The maximum possible variance is $25/4$ and is achieved by the GHZ state 
$\frac{1}{\sqrt{2}}\left(\ket{\psi_{00000}}+\ket{\psi_{11111}}\right)$, whereas the product state $\ket{+++++}$ achieves a variance of 
$\frac{5}{4}$.

The operators $S_2,\,S_3$ derived from our construction do not look like the standard $S_x,\,S_y$ operators.  
However, as mentioned above, the operators $S_2,\,S_3$ are one possible solution. One obtains $S_x$ ($S_y$) from $S_2$ ($S_3$) by conjugating the latter with the unitary that maps the eigenbasis of $S_x$ ($S_y$) to that of $S_2$ ($S_3$).  Applying the same unitary to the state in Eq.~\eqref{eq:pg_state_local} one obtains the
Dicke state $\ket{5, 2}$ which achieves a variance of $17/4$.  In the next sub-section we consider Hamiltonians based on graph states.

\subsection{One-dimensional cluster state Hamiltonian}
\label{sec:cluster}
Consider the one-dimensional cluster state Hamiltonian, $H=\sum_{i=1}^N\sz^{(i-1)}\sx^{(i)}\sz^{(i+1)}$, investigated by 
Rosenkratz and Jaksch~\cite{RJ09}.  This Hamiltonian can be easily obtained from the local Hamiltonian, $H=\sum_{i=1}^N\sx^{(i)}$, as
\begin{equation}
\sum_{i=1}^N\sz^{(i-1)}\sx^{(i)}\sz^{(i+1)}=V\left(\sum_{i=1}^{N}\sx^{(i)}\right)V^{\dagger},
\end{equation}
where 
\begin{equation}
 V=\prod_{i=1}^N U^{(i,i+1)}_\mathrm{ph}
\end{equation}
with 
\begin{equation}
U^{(i,i+1)}_\mathrm{ph}=\ketbra{0}{0}\otimes I^{(i+1)}+\ketbra{1}{1}\otimes\sz^{(i+1)}.  
\label{phasegate}
\end{equation}
In general, any graph state Hamiltonian, $H_G$ where $G=(V,E)$ is a graph whose vertices, $V$, correspond to 
physical qubits and edges $E$ between two vertices correspond to interactions,  can be written as 
$H_G=\sum_i K^{(i)}$, where $K^{(i)}$ are stabilizers~\cite{Hein:06}. All such Hamiltonians can be obtained from the local 
Hamiltonian $H=\sum_i \sx^{(i)}$ by conjugation with 
\begin{equation}
V=\prod_{i,j\in E} U^{(i,j)}_{\mathrm{ph}}, 
\label{eq:conj}
\end{equation}
where $U^{(i,j)}_{\mathrm{ph}}$ is the two qubit phase gate in Eq.~\eqref{phasegate} between any two qubits $i, j$ 
connected by an edge.  Hence, the construction of $S_2, \, S_3$  for the linear cluster state Hamiltonian proceeds by 
first constructing the corresponding operators for the local Hamiltonian, $H=\sum_{i=1}^N\sigma_x^{(i)}$, followed by 
conjugation by $V$.  Similarly the pretty good states for the one-dimensional cluster 
state Hamiltonian are obtained by applying $V$ on the pretty good state constructed for $H=\sum_{i=1}^N
\sigma_x^{(i)}$. 

\subsection{Nearest-neighbor Hamiltonian}
\label{sec:nearest-neighbour-Hamiltonian}
In this subsection we show how our construction works for the case where the dynamics of our quantum system is given 
by $U_{\theta}=\exp(i\theta H)$ with $H$ the \defn{nearest-neighbour} Hamiltonian 
\begin{equation}
H_{nn}=\sum_{i=1}^{N-1}\sz^{(i)}\sz^{(i+1)}.
\label{eq:nearest_neighbourH}
\end{equation}
It can be shown (see Appendix~\ref{append4}) that the spectrum of $H_{nn}$ in Eq.~\eqref{eq:nearest_neighbourH} is 
given by
\begin{align}\nonumber
\sigma(H_{nn})=&\left\{\lambda_x=N-1-2x; \;\; d_x=2 \binom{N-1}{x},\right.\\
&\left. x\in (0\ldots N-1)\right\},
\label{eq:non-local-Hamiltonian-spectrum}
\end{align}
Notice that, as the multiplicities of the Hamiltonians 
in Eqs.~(\ref{eq:local-Hamiltonian},~\ref{eq:nearest_neighbourH}) are not equal, there exist no real numbers 
$\alpha, \,\beta$ such that $\alpha H_{nn}+\beta\one=H$.   
Nearest-Neighbour Hamiltonians appear frequently in one-dimensional systems, and in metrology the goal is to estimate the interaction strength between neighboring qubits.

In order to explicitly illustrate the method of Section~\ref{sec:Construction}, let us consider the case $N=5$. Using 
Theorem~\ref{thm2} the matrices $S_3^{(k,k+1)}$ are given by
\begin{align}\nonumber
&S_3^{(1,2)}=\begin{pmatrix}2&0&0&0&0&0&0&0\\0&2&0&0&0&0&0&0\end{pmatrix}\\ \nonumber
&S_3^{(2,3)}=\left(\begin{array}{cccccccccccc}\sqrt{6}&0&0&0&0&0&0&0&0&0&0&0\\
			0&\sqrt{6}&0&0&0&0&0&0&0&0&0&0\\
			0&0&\sqrt{2}&0&0&0&0&0&0&0&0&0\\
			0&0&0&\sqrt{2}&0&0&0&0&0&0&0&0\\
			0&0&0&0&\sqrt{2}&0&0&0&0&0&0&0\\
			0&0&0&0&0&\sqrt{2}&0&0&0&0&0&0\\
			0&0&0&0&0&0&\sqrt{2}&0&0&0&0&0\\
			0&0&0&0&0&0&0&\sqrt{2}&0&0&0&0
	\end{array}\right)\\ \nonumber
&S_3^{(3,4)}=S_3^{(2,3)\dagger}\\
&S_3^{(4,5)}=S_3^{(1,2)\dagger},
\label{eq:constr_operators_non_local}  
\end{align}
and the corresponding matrices $S_2^{(k+1,k)}$ are obtained via the relation $S_2^{(k,k+1)}=-i S_3^{(k,k+1)}$.  

Having obtained the operators $S_2,\, S_3$ we can now apply Theorem~\ref{thm1} 
and determine states that yield Heisenberg-like scaling for noiseless parameter estimation. 
Choosing $H_{nn}=S_1$, and the ladder operators
\begin{equation}
 J^{(3)}_\pm=\frac{1}{\sqrt{2}}\left(S_1\pm iS_2\right)
 \label{eq:nearest_neighbour_ladders}
\end{equation}
the eigenstates of $S_3$ with the minimum eigenvalue are a $2$-fold degenerate subspace spanned by 
\begin{align}\nonumber
\ket{\Psi_1}&=\frac{1}{4}\ket{\psi_{00000}}-\frac{1}{2}\ket{\psi_{00010}}+\sqrt{\frac{3}{8}}\ket{\psi_{01010}}\\ \nonumber
&-\frac{1}{2}\ket{\psi_{10110}}+\frac{1}{4}\ket{\psi_{11110}}\\ \nonumber
\ket{\Psi_2}&=\frac{1}{4}\ket{\psi_{00001}}-\frac{1}{2}\ket{\psi_{00011}}+\sqrt{\frac{3}{8}}\ket{\psi_{01011}}\\
&-\frac{1}{2}\ket{\psi_{10111}}+\frac{1}{4}\ket{\psi_{11111}},
\label{eq:nearest-neighbour_s2-min-eigenstates}
\end{align}
where $\{\ket{\psi_{\j}}\}$ denotes the basis in which $H_{nn}$ is a diagonal matrix with its eigenvalues ordered from 
highest to lowest. As any state in the span of the states 
given in Eq.~\eqref{eq:nearest-neighbour_s2-min-eigenstates} can be used as our initial state, we choose without loss 
of generality $\ket{\Psi_1}$.  Applying the raising operator, $J^{(3)}_+$, twice to $\ket{\Psi_1}$ yields the normalized 
state 
\begin{equation}
\ket{\Phi_1}=\sqrt{\frac{3}{8}}\ket{\psi_{00000}}-\frac{1}{2}\ket{\psi_{01011}}+\sqrt{\frac{3}{8}}\ket{\psi_{11110}}.
\label{eq:nearest-neighbour_pg_state}
\end{equation}
The variance of the Hamiltonian with respect to this state is $12$.  The 
maximum possible variance of $H_{nn}$, achievable with the state 
$\ket{\psi_\mathrm{opt}}=\sqrt{\frac{1}{2}}(\ket{00000}+\ket{10101})$, 
is $16$.  Finally, the product state $\ket{+++++}$ achieves a variance of $4$.

\subsection{Non-local Hamiltonian}
\label{sec:non-local-Hamiltonian}

In order to illustrate our method for a Hamiltonian different from the single-body, graph state, and nearest-neighbour 
Hamiltonians, i.e.~that cannot be obtain by rescaling and shifting the spectrum of either the single-body 
Hamiltonian or the nearest-neighbour Hamiltonian, consider the Hamiltonian  
\begin{equation}
H_{nl}=\sum_{i=1}^N\sy^{(i)}\sy^{(i+1)}+\sx^{\otimes N}+\sz^{\otimes N}.
\label{eq:non-localH}
\end{equation}
Let us consider the case $N=4$, for which the eigenvalues and corresponding multiplicities are 
\begin{align}\nonumber
\lambda_x=& \{5,\,3,\,1,\,-1,\,-3,\,-5\},\\
d_x=&\{1,\,1,\,6,\,6,\,1,\,1\}.
 \label{nlspectrum}
\end{align}
We remark that the Hamiltonian in Eq.~\eqref{eq:non-localH} has a homogeneously gapped spectrum only if $N$ is 
even.

Applying Theorem~\ref{thm2} yields the following matrices for $S_3^{(k,k+1)}$
\begin{align}\nonumber
S_3^{(1,2)}&=\sqrt{5}\\ \nonumber
S_3^{(2,3)}&=\left(\begin{array}{ccccccc}
               \sqrt{8}&0&0&0&0&0
              \end{array}\right)\\ \nonumber
S_3^{(3,4)}&=\left(\begin{array}{ccccccc}
               3&0&0&0&0&0\\
               0&1&0&0&0&0\\
               0&0&1&0&0&0\\
               0&0&0&1&0&0\\
               0&0&0&0&1&0\\
               0&0&0&0&0&1
              \end{array}\right)\\ \nonumber
S_3^{(4,5)}&=S_3^{(2,3)\dagger}\\
S_3^{(5,6)}&=S_3^{(1,2)\dagger},
\label{eq:constr_operators_non-local}
\end{align}
and the corresponding matrices $S_2^{(k+1,k)}$ are again obtained via the relation $S_2^{(k,k+1)}=-i S_3^{(k,k+1)}$.

Choosing $H_{nl}=S_1$, and ladder operators $J^{(3)}_\pm$ as in Eq.~\eqref{eq:nearest_neighbour_ladders}
the eigenstate corresponding to the lowest eigenvalue for $S_3$ are
\begin{align}\nonumber
 \ket{\psi_\mathrm{min}}=&-\frac{1}{4\sqrt{2}}\ket{\psi_{0000}}+\sqrt{\frac{5}{32}}\ket{\psi_{0001}}-\frac{\sqrt{5}}{4}\ket{\psi_{0010}}\\
 &+\frac{\sqrt{5}}{4}\ket{\psi_{1000}}-\sqrt{\frac{5}{32}}\ket{\psi_{1110}}+\frac{1}{4\sqrt{2}}\ket{\psi_{1111}},
 \label{eq:s2_min_eigenstate_non-local}
\end{align}
where $\{\ket{\psi_{\j}}\}$ denotes the basis in which $H_{nl}$ is a diagonal matrix with its eigenvalues ordered from 
highest to lowest.  Applying the raising operator, $J^{(3)}_+$, twice on the state in 
Eq.~\eqref{eq:s2_min_eigenstate_non-local} yields the 
normalized state 
\begin{align}\nonumber
 J^{(3)2}_+\ket{\Psi}=&-\frac{\sqrt{5}}{4}\ket{\psi_{0000}}+\frac{1}{4}\ket{\psi_{0001}}+\frac{1}{2\sqrt{2}}\ket{\psi_{0010}}\\
 &-\frac{1}{2\sqrt{2}}\ket{\psi_{1000}}-\frac{1}{4}\ket{\psi_{1110}}+\frac{\sqrt{5}}{4}\ket{\psi_{1111}},
 \label{eq:pg_non-local}
\end{align}
whose variance with respect to $H_{nl}$ is $17$.  Note that the optimal variance for $H_{nl}$ is 25 and is achieved by 
the equal superposition of the maximum and minimum eigenstates of $H_{nl}$.  Finally, the product state $\ket{++++}$ 
achieves a variance of $4$.

\section{Pretty good states in the presence of local dephasing noise}
\label{sec:noise}

In this section we analyze the performance of pretty good states for noiseless metrology in the presence of local 
dephasing noise.  Specifically,  in Sec.~\ref{sec:analytic} we use the upper bound to the QFI of~\cite{Escher:11}, as 
discussed in Sec.~\ref{sec:bounds},
to analytically bound the performance of our pretty good states.  We find that for local 
dephasing noise and a local Hamiltonian, the bound scales at the SQL and, when considering a particular local Kraus 
decomposition of the CPTP map describing the dephasing noise, the bound of~\cite{Escher:11} 
yields the same result for a large variety of states~\cite{Froewis:14}.  Moreover, for local dephasing and a nearest-
neighbor Hamiltonian we show that the bound of~\cite{Escher:11} for the choice of local Kraus decompositions 
coincides with the QFI in the absence of noise, which is always an upper bound to the QFI.  
As a result, we determine the 
usefulness of pretty good states for noisy metrology by numerically 
evaluating their QFI.  As computation of the latter becomes intractable with increasing number of probe systems we 
compute the optimal QFI for $N\leq 12$ in the case of local noise and a nearest-neighbour Hamiltonian in 
Sec.~\ref{sec:numeric}.  The case of local noise and local Hamiltonian has already been reported by us 
elsewhere~\cite{Froewis:14}.

\subsection{Analytical bounds for metrology in the presence of local dephasing noise}
\label{sec:analytic} 

In this sub-section we provide a general expression for the bound derived in~\cite{Escher:11} 
(see also Sec.~\ref{sec:analytic}) for the case of phase 
estimation in the presence of local dephasing noise.   We then compute this bound with respect to the pretty good 
states constructed in Sec.~\ref{sec:PGS}.

Consider the phase estimation scenario where $N$ probes are subject to a unitary evolution $U(\theta)=e^{i\theta H}$, 
with $H\in\cB(\cH^{\otimes N})$ the total Hamiltonian acting on the $N$ probes. In addition, 
the $N$ probes are subject to local dephasing noise described by a CPT map $\cE$, such that 
$\cE\left[U(\theta)(\cdot)U^\dagger(\theta)\right]=U(\theta)\cE(\cdot)U^\dagger(\theta)$, 
i.e.~the unitary evolution commutes with the 
noise.  

Without loss of generality let us assume that the local dephasing acts along the $z$-axis of the Bloch sphere. Thus, for 
a single, two-level system local dephasing is described by a CPT map with Kraus operators
\begin{align}\nonumber
S_0=&\sqrt{p}\,\one\\
S_1=&\sqrt{1-p}\, \sz,
\label{eq:Kraus1}
\end{align}
where $p=\frac{1-e^{-\gamma t_0}}{2}$ with $\gamma$ denoting the strength of the noise and $t_0$ the time interval 
under which the system is subject to noise~\footnote{One can equivalently describe the evolution of the quantum 
system by the master equation $\partial_t\rho=i\theta[H,\rho]+\frac{\gamma}{2}(\sz\rho\sz-\rho)$, where the evolution 
occurs over the fixed time interval $t_0$.}. Consequently, the CPT map describing local dephasing of $N$ qubits is described by the Kraus operators 
\begin{equation}
S_{\bm m}=\bigotimes_{i=1}^NS_{m_i}
\label{eq:Krausn}
\end{equation}
where $\bm m\equiv m_1\ldots m_N$ and $m_i\in(0,1),\,\forall i\in(0,\ldots, N)$.  As the unitary dynamics commutes with local dephasing noise, it follows that the entire dynamical evolution can be described by a CPT map, with Kraus operators given by
\begin{equation}
\tilde{S}_{\bm m}(\theta)=U(\theta)S_{\bm m}.
 \label{eq:KrausN}
\end{equation}

As mentioned in Sec.~\ref{sec:ParEst}, an upper bound on the QFI is given by Eq.~\eqref{eq:bound}. Substituting the 
Kraus operators given by Eq.~\eqref{eq:KrausN} into Eq.~\eqref{eq:bound} yields the trivial upper bound bound 
$\mathcal{F}\leq 4(\Delta H)^2$, which is the QFI one obtains in the absence of uncorrelated dephasing noise.  
However, as 
any two Kraus decompositions for the same CPT map are unitarily related, one can write any Kraus decomposition of 
$\cE_\theta$ as
\begin{equation}
\Pi_{\bm n}(\theta)=\sum_{\bm m}V(\theta)_{\bm{nm}}\tilde{S}_{\bm m}(\theta).
\label{eq:KrausUN2}
\end{equation}
In order to minimize Eq.~\eqref{eq:bound} one must optimize over all unitary operators $V(\theta)$.  Such optimization 
can be performed using semi-definite programming~\cite{Kolodynski:12,*Kolodynski:13}.  For the case of quantum 
metrology in the presence of local dephasing noise, and with the Hamiltonian given by $H=1/2S_z$ Escher {\it et al.}
show that for the case where $N\to\infty$ it suffices to optimize over all Kraus operators that are unitarily related by
$V(\theta)=e^{i\alpha\theta B}$ where $B$ is a Hermitian operator and $\alpha$ is a free parameter that needs to be 
optimized. 

With $V(\theta)$ given as above, one can compute the operators $A_1$, $A_2$ of 
Eq.~\eqref{eq:escher-operators}.  Differentiating the Kraus operators in Eq.~\eqref{eq:KrausUN2} with respect to $\theta$ one obtains
\begin{align}\nonumber
&\frac{\mathrm{d}\Pi_{\bm n}(\theta)}{\mathrm{d}\theta}=\sum_{\bm m}\left[\frac{\mathrm{d}V(\theta)}
{\mathrm{d}\theta}\right]_{\bm{nm}}\tilde{S}_{\bm m}(\theta)+V(\theta)_{\bm{nm}}\frac{\mathrm{d}\tilde{S}_{\bm 
m}(\theta)}{\mathrm{d}\theta}\\
&=i\alpha\sum_{\bm m}\left[B V(\theta)\right]_{\bm{nm}}\tilde{S}_{\bm m}(\theta)+iH\sum_{\bm m}
V(\theta)_{\bm{nm}}\tilde{S}_{\bm m}(\theta),
\label{eq:KrausUN2diff}
\end{align}
where we have made use of the fact that (recall Eq.~\eqref{eq:KrausN})
\begin{align}\nonumber
\frac{\mathrm{d}V(\theta)}{\mathrm{d}\theta}&=i\alpha B\,V(\theta)\\
\frac{\mathrm{d}\tilde{S}_{\bm m}(\theta)}{\mathrm{d}\theta}&=iH\tilde{S}_{\bm m}(\theta). 
\end{align}
Using a similar calculation for $\mathrm{d}\Pi^\dagger_{\bm n}(\theta)/\mathrm{d}\theta$ 
one obtains for the operators $A_1,\,A_2$ of Eq.~\eqref{eq:escher-operators}
\begin{align}\nonumber
A_1&=H^2+2\alpha H\sum_{\bm{mn}}S_{\bm m}^\dagger B_{\bm{mn}}S_{\bm n}\\ \nonumber
&+\alpha^2\sum_{\bm{mn}}S_{\bm m}^\dagger \left[B^2\right]_{\bm{mn}}S_{\bm n}\\
A_2&=H+\alpha\sum_{\bm{mn}}S_{\bm m}^\dagger B_{\bm{mn}}S_{\bm n},
\label{eq:local-dephasing-escher-operators}
\end{align}
and the bound of Eq.~\eqref{eq:bound} is given by 
\begin{equation}
C_Q\left(\ketbra{\psi_S}{\psi_S},\,\Pi_{\bm m}(\theta)\right)=4\left((\Delta H)^2+2\alpha \Xi+\alpha^2\Omega\right),
\label{eq:local-dephasing-escher-bound}
\end{equation}
where 
\begin{align}\nonumber
\Xi&=\left\langle H \sum_{\bm{mn}}S_{\bm m}^\dagger B_{\bm{mn}}S_{\bm n}\right\rangle-\left\langle H\right\rangle \times\\ 
\nonumber
&\left\langle \sum_{\bm{mn}}S_{\bm m}^\dagger B_{\bm{mn}}S_{\bm n}\right\rangle\\
\Omega&=\left\langle \sum_{\bm{mn}}S_{\bm m}^\dagger \left[B^2\right]_{\bm{mn}}S_{\bm n}\right\rangle-\left\langle 
\sum_{\bm{mn}}S_{\bm m}^\dagger B_{\bm{mn}}S_{\bm n}\right\rangle^2.
\label{eq:local-dephasing-escher-special-operators}
\end{align}
The minimum of Eq.~\eqref{eq:local-dephasing-escher-bound} over the Kraus operators 
$\{\Pi_m(\theta)\}$ occurs for $\alpha_{\min}=-\Xi/\Omega$, and is given 
by 
\begin{equation}
C_Q^{\mathrm{min}}\left(\ketbra{\psi_S}{\psi_S},\,\Pi_{\bm m}(\theta)\right)=4\left((\Delta H)^2-\frac{\Xi^2}{\Omega}\right).
\label{eq:local-dephasing-escher-lower-bound}
\end{equation}

Our goal is to use the bound given in Eq.~\eqref{eq:local-dephasing-escher-lower-bound} to gauge the performance of pretty 
good states for noiseless metrology in the presence of local dephasing noise. We first focus on the case of pretty good 
states for the local Hamiltonian of Eq.~\eqref{eq:local-Hamiltonian}.  As the noise acts locally on each of the $N$ 
qubits, Escher {\it et al.} restrict their search for the Kraus decomposition that minimizes Eq.~\eqref{eq:bound} to local Kraus decompositions.  To that end they assume
$V(\theta)=e^{i\alpha\theta S_x}$.  For this choice of $V(\theta)$ it can be shown that 
(see Appendix~\ref{append5})
\begin{equation}
C_Q^{\mathrm{min}}=\frac{4N(\Delta H)^2(1-q^2)}{N(1-q^2)\one+q^2(\Delta H)^2},
\label{eq:local-dephasing-localH-bound}
\end{equation}
where $q\equiv 4p(1-p)$.

Indeed Escher {\it et al.} go on to show that in the limit of large $N$ the upper bound in 
Eq.~\eqref{eq:local-dephasing-localH-bound} coincides with the well-known lower bound on the QFI 
found in~\cite{OK01}.  However, there is no 
{\it a priori} reason to believe that for finite $N$, the bound in Eq.~\eqref{eq:local-dephasing-localH-bound} 
is minimized by local Kraus operators.    As the CPT map describing local dephasing noise is 
permutation invariant, perhaps a better bound for the case of finite $N$ can be obtained for Kraus 
operators  that respect the permutation symmetry of the map.  One 
possible choice of such Kraus operators are 
\begin{equation}
\tilde{K}_{\bm n}(\theta)=U(\theta)^{\otimes N}\sum_{\bm m} C_{\bm{nm}} S_{\bm m},
\label{Eq:sntensors}
\end{equation}
where $C$ is the Clebsch-Gordan transform~\cite{BCH06}.  The Kraus operators defined in Eq.~\eqref{Eq:sntensors} 
are collective operators acting on all $N$ qubits, and arise if we consider all $N$ qubits as coupling to a single $2^N$-
dimensional environment which we then trace over (see Eq.~\eqref{eq:purification}). However, as the Clebsch-Gordan 
transform is independent of $\theta$, the upper bound achievable by the Kraus operators in Eq.~\eqref{Eq:sntensors} 
is the same as the one achieved by the local Kraus decomposition.

We now determine the usefulness of our pretty good states from Sec.~\ref{sec:PGS} by computing the bound of 
Eq.~\eqref{eq:local-dephasing-localH-bound}.  The variance, $(\Delta H)^2_{\mathrm{PG}}$, of $H$ in 
Eq.~\eqref{eq:local-Hamiltonian} with respect 
to our pretty good states of $N$ qubits can be easily calculated to be $(\Delta H)^2_{\mathrm{PG}}=\frac{N}{4}(\frac{N}{2}+1)$, 
and the bound of 
Eq.~\eqref{eq:local-dephasing-localH-bound} is given by 
\begin{equation}
C_Q^{\mathrm{min}}(\mathrm{PG})=\frac{4N\left(\frac{N}{2}+1\right)\left(1-q^2\right)}{4(1-q^2)+q^2\left(\frac{N}{2}+1\right)}.
\label{eq:boundlocalpgs}
\end{equation}
In the limit $N\to\infty$, $C_Q(\mathrm{PG})\to \frac{4N(1-q^2)}{q^2}$.  If, however, 
we compute the bound of Eq.~\eqref{eq:local-dephasing-localH-bound} with respect to the GHZ state, one easily finds
\begin{equation}
C_Q(\mathrm{GHZ})=\frac{4N^2\left(1-q^2\right)}{4(1-q^2)+q^2N},
\label{eq:boundlocalghz}
\end{equation}
which, in the limit $N\to\infty$ also tends to $\frac{4N(1-q^2)}{q^2}$.

In fact, one can infer from Eq.~\eqref{eq:local-dephasing-localH-bound} that in the limit of large $N$, $C_Q$ 
will always scale as the SQL, up to some pre-factor.  For the optimal product state ($(\Delta H)^2= N/4$), this pre-
factor is $1-q^2$ whereas for our pretty good states the pre-factor is $(1-q^2)/q^2$.  This is the same pre-factor 
as for the GHZ state, which is known to yield the same precision as product states for 
large enough $N$ in the case of frequency estimation ($q=e^{-2\gamma t}$).  Hence, whereas the bound 
of Eq.~\eqref{eq:local-dephasing-localH-bound} gives the right scaling in 
precision for the case of phase estimation using local Hamiltonians in the presence of local dephasing noise, it is 
incapable of discriminating which states attain this bound.   

We now consider the case of phase estimation using the nearest-neighbour Hamiltonian 
(Eq.~\eqref{eq:nearest_neighbourH}) and local dephasing noise. We choose the same Kraus decomposition as 
Escher {\it et al.}, which lead to a good bound of the QFI in the case of local Hamiltonians. Our reason for choosing this particular Kraus decomposition is that the noisy process we consider  shares the same characteristics in both cases, namely local noise whose generators commute with the unitary dynamics.  In contrast, we will show 
that the bound obtained for the nearest-neighbour Hamiltonian considered here coincides with the QFI in the absence 
of any noise which is a trivial upper bound to the QFI. 

Using the local Kraus decomposition of the CPT map, $\tilde{S}_{\bm m}(\theta)$ gives  
\begin{align}\nonumber
&C_Q(\ket{\psi},\tilde{S}_{\bm m}(\theta))=\\
&4\left((\Delta H_{nn})^2-\frac{q^2(\langle H_{nn}S_z\rangle-\langle H_{nn}\rangle\langle S_z\rangle)^2}{N(1-q^2)\one
+q^2(\Delta H_{nn})^2}\right),
\label{eq:Escherboundnearestneighbour}
\end{align}
The states $\{\ket{\bm m}, \, \sx^{\otimes N}\ket{\bm m}\}$ are both eigenstates of $H_{nn}$ corresponding to the same eigenvalue, where $S_z\ket{\bm m}=m\ket{\bm m}$, 
$S_z\sx^{\otimes N}\ket{\bm m}=-m\sx^{\otimes N}\ket{\bm m}$.  
As $\alpha \ket{\bm m}+\beta\sx^{\otimes N}\ket{\bm m}$ is an eigenstate of $H_{nn}$ for all $\alpha,\beta$ satisfying $|\alpha|^2+|\beta|^2=1$, it follows that $(\Delta H_{nn})^2$ is invariant 
for any choice of $\alpha,\, \beta$, and that for $|\alpha|=|\beta|$,  $\langle H_{nn}S_z\rangle=\langle H_{nn}\rangle\langle S_z\rangle=0$.  Thus,
the maximum value of Eq.~\eqref{eq:Escherboundnearestneighbour} is $C_Q=4(\Delta H_{nn})^2$ and is achieved by choosing the state $\frac{1}{\sqrt{2}}\left(\ket{\psi_{\mathrm{max}}}+e^{i\phi}\ket{\psi_{\mathrm{min}}}\right)$, where $\phi\in(0,2\pi]$ and $\ket{\psi_{\mathrm{max}(\mathrm{min})}}$ are eigenstates of $H_{nn}$ belonging to the doubly degenerate subspaces corresponding to the maximum (minimum) eigenvalue, with $|\alpha|=|\beta|=1/\sqrt{2}$. 

Moreover, as both the minimum and 
maximum eigenspaces of the Hamiltonian in Eq.~\eqref{eq:nearest_neighbourH} are doubly degenerate, we can 
always choose $|\alpha|=|\beta|=1/\sqrt{2}$ for which Eq.~\eqref{eq:Escherboundnearestneighbour} gives the trivial 
bound $C_Q=4(\Delta H_{nn})^2$ for both the optimal and pretty good states.

The reason why the bound of Eq.~\eqref{eq:local-dephasing-escher-bound} is trivial for phase estimation using a 
nearest-neighbour Hamiltonian in the presence of local dephasing noise is due to the degeneracy of the spectrum of 
$H_{nn}$ in Eq.~\eqref{eq:nearest_neighbourH}.  States with different eigenvalue of $S_z$ belong to the same 
eigenspace of $H_{nn}$, and the numerator of the second term in Eq.~\eqref{eq:Escherboundnearestneighbour} can be optimized independently of $(\Delta H_{nn})^2$. Hence,  we are free to choose the states within a given eigenspace of $H_{nn}$ such that
the numerator in the second term of Eq.~\eqref{eq:Escherboundnearestneighbour} equals zero. One may argue that restricting the 
search for the optimal Kraus decomposition over local Kraus operators in this instance is a bad one.  One suitable 
choice could be to choose $V=e^{i\alpha\theta S_3}$ in Eq.~\eqref{eq:KrausUN2}, where $S_3$ is the Hermitian 
operator obtained by our construction in Sec.~\ref{sec:Construction}.

\subsection{QFI for pretty good states under local dephasing noise}
\label{sec:numeric} 

In this subsection we compare the QFI of pretty good states using local and nearest-neighbour Hamiltonians to that of 
the product and optimal states in the presence of local dephasing noise. We find that, 
for both scenarios and moderate number of probe systems, $N$, the pretty good states constructed in 
Secs.~\ref{sec:local-Hamiltonian} perform better than product states, but are far from the true optimal states.  The performance of pretty good states for local Hamiltonians under various types of local, as well as correlated noise, has been investigated elsewhere~\cite{Froewis:14}.

Using Eqs.~(\ref{eq:QFI1},~\ref{eq:SLD}) the QFI in the presence of local noise is
\begin{equation}
\mathcal{F}(\rho(\theta))=4\sum_{i<j} \frac{(\lambda_i-\lambda_j)^2}{\lambda_i+\lambda_j} \left| \left\langle \psi_i \right|  
H \left| \psi_j \right\rangle  \right|^2,
\label{fisherdephasing}
\end{equation}
where $\rho(\theta)\equiv\cE_\theta[\rho]=\sum_i\lambda_i\ketbra{\psi_i}{\psi_i}$.
Here, we study the performance of the pretty good states for frequency estimation, where the QFI obtained per unit 
time, ${\cal F}(\rho_\lambda(t))/t$, has to be optimized over time leading to an optimal interrogation time $t_{\rm opt}$.
We consider the relative improvement between entangled input states and the optimal product state $| + \rangle ^{\otimes N}$,
\begin{equation}
\label{eq:1}
I_{\mathrm{rel}}(\psi) = \frac{\max_t\mathcal{F}(\mathcal{E}_{\theta}(\left|\psi  \right\rangle\!\left\langle \psi\right| ))/t}{\max_t\mathcal{F}(\mathcal{E}_{\theta}(\left|+  \right\rangle\!\left\langle +\right|^{\otimes N} ))/t}.
\end{equation}
In addition, we calculate $I_{\mathrm{rel}}(\psi)$ for the optimal state in the absence and in the presence of noise. Whereas the former is simply the equally weighted superposition of the eigenstates with smallest and largest eigenvalue of $H$, the latter has to be numerically determined. To reduce the computational effort, we restrict ourselves to the totally symmetric subspace, i.e.~the subspace spanned by $J_+^{m} \left| \psi_{\mathrm{min}} \right\rangle , m \in \left\{ 0,\dots,2j_{\mathrm{max}} \right\}$, where $| \psi_{\mathrm{min}} \rangle $ is the ground state of $H$ and  $J_+$ is the ladder operator that creates excitations in the spectrum of $H$. In case of the nearest-neighbour Hamiltonian, $H_{nn}$, given by Eq.~\eqref{eq:nearest_neighbourH} we use the operators
\begin{align}\nonumber
S_2 &= i \sum_{j=1}^{N-1} \sigma_x^{\otimes j-1}\otimes \sigma_y\otimes\sigma_z \otimes \one^{\otimes N-j-1},\\
S_3 &= \sum_{j= 1}^{N-1} \sigma_x^{\otimes j} \otimes \one^{\otimes N-j}
\label{eq:alternative-operators}
\end{align}
to define $J_{+}$. One can easily check that the set $\{H_{nn},\,S_2,\, S_3\}$ is a valid choice of generators of $\mathfrak{su}(2)$. As $H_{nn}$ has a doubly degenerate ground energy spectrum, any state 
$\ket{\psi_{\min}(\alpha)}= \cos(\alpha/2) \ket{0101\dots} +  \sin(\alpha/2) \ket{1010\dots}$, for $\alpha \in \mathbbm{R}$, 
is a ground state of $H_{nn}$. Here, we fix $\alpha = \pi/2$, as this particular ground state is invariant under collective spin flips $\sigma_x^{\otimes N}$, which is a symmetry of $H_{nn}$.   This choice of $\alpha$ turns out  to numerically maximize $I_{\mathrm{rel}}$.

In the case of local Hamiltonians, $I_{\mathrm{rel}}$ was numerically computed for a variety of state families, including the pretty good states of Eq.~\eqref{eq:dickestate}, in~\cite{Froewis:14}. It was shown that these states outperform both the product and GHZ states, but perform significantly worse than the optimal states. Note that the best results are not achieved with $k= \lfloor j_{\mathrm{max}} \rfloor = \lfloor N/2 + 1 \rfloor$ (see Eq.~(\ref{eq:thm1})), but with a smaller $k$, which yields a reduced performance in the noiseless scenario.

For the nearest-neighbour Hamiltonian, we find a similar picture for small $N$ (see Fig.~\ref{fig:NumericalResults}). While pretty good states improve the metrological sensitivity, there is a gap to the performance of optimal states and, in this case, also to the optimal states in the absence of noise. Again, the optimal excitation $k$ is generally not $\lfloor j_{\mathrm{max}} \rfloor = \lfloor N/2 \rfloor$, but smaller (see Fig.~\ref{fig:3}).
 
\begin{figure}[htbp]
\centerline{\includegraphics[width=\columnwidth]{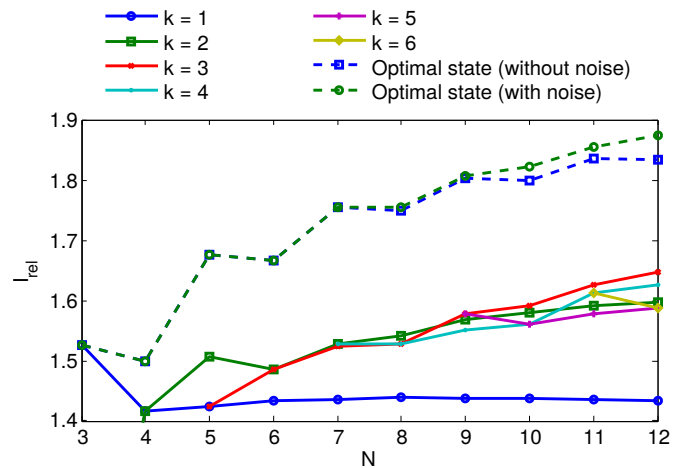}}
\caption[]{\label{fig:NumericalResults} Relative improvement $I_{\mathrm{max}}$ of different state families for the scenario nearest-neighbour Hamiltonian plus dephasing noise. The value of $k$ refers to the number of excitations above the ground state space of $S_3$ (see Eq.~(\ref{eq:thm1})). Note that $| + \rangle ^{\otimes N}$  is the ground state of $S_3$. For all $0<k\leq \lfloor N/2\rfloor$, we find increased performance compared to the product states. However, there exist states that give rise to higher sensitivity; for example, the optimal states in the absence of noise. In contrast to the scenario with local Hamiltonian and dephasing, these states are --at least for small $N$-- close to the actual optimal states found by numerical algorithms.}
\label{fig:3}
\end{figure}

\section{Conclusion}
\label{sec:conclusion}

In this work we use Lie algebraic techniques to construct states that achieve Heisenberg scaling in precision for a 
class of Hamiltonians, namely those with a homogeneously gapped spectrum, that satisfy the $\mathfrak{su}(2)$ Lie 
algebra.  This is a subclass of all Hamiltonians that satisfy the $\mathfrak{su}(2)$ Lie algebra and includes local, nearest-neighbor, graph state, and topological Hamiltonians many 
of which play an important role in quantum metrology. For Hamiltonians with a homogeneously gapped spectrum  
we identify necessary conditions regarding the multiplicities of $H$, for any Hermitian operator to be a valid generator 
of the $\mathfrak{su}(2)$ Lie algebra.

We also investigate the performance of the states constructed by our procedure in the presence of local dephasing 
noise.  Specifically, we calculate a well-known bound on the QFI~\cite{Escher:11} and find that, for the case of a local 
Hamiltonian, our states achieve the SQL.  However, for a particular local Kraus decomposition of the 
CPTP map describing local dephasing noise we discover that the bound of~\cite{Escher:11} 
yields the same bound for a variety of states.  Moreover, in the case of nearest-neighbor Hamiltonians the bound 
of~\cite{Escher:11}, restricted to a particular local Kraus decomposition, is equal to the QFI in the absence of noise, and thus does not 
yield an informative bound.  A tighter bound in this case can be obtained when optimizing over all Kraus operators and 
not just the local ones. 

We also numerically determined the actual QFI of the states constructed by our techniques for the case of nearest-neighbor Hamiltonians (for the results of the local Hamiltonian, see Ref.~\cite{Froewis:14}). We discover that---similar 
to the scenario with local Hamiltonians---pretty good states outperform product states. However, they are suboptimal as 
there exist other states providing higher sensitivity.  It would be interesting to analyze the states of our construction 
under the influence of other types of noise.

Several important questions arise with regards to establishing insightful upper bounds on the QFI.  For example, 
in searching for the optimal Kraus decomposition in the case where $H=S_z$ and local dephasing noise, 
Escher {\it et al.} optimized over the unitary generated by $S_x$.  If the Hamiltonian is a generator of 
$\mathfrak{su}(2)$ one may ask whether in searching for the optimal Kraus decomposition, it is sufficient to search 
over all unitary re-mixings of the Kraus operators generated by either of the remaining generators of the algebra.  

\section*{Acknowledgements}
This work was supported by the Austrian Science Fund (FWF): Grant numbers P24273-N16, Y535-N16, and J3462.
\appendix
\section{Proof of Theorem~\ref{thm1}}
\label{append2}

In this appendix we provide a proof of Theorem~\ref{thm1} regarding the construction of pretty good states for 
noiseless metrology. 
\begin{reptheorem}{thm1}
Let $\{S_1,\,S_2,\,S_3\}$ be a set of generators for $\mathfrak{su}(2)$. Assume, without loss of generality, 
that $H\equiv S_1$, and let $\ket{\psi_\mathrm{min}}$ be an eigenstate of $S_3$ corresponding to the smallest 
eigenvalue.  Then the variance of $H$ with respect to the state 
\begin{equation}
\ket{\psi}=\sqrt{\frac{1}{\mathcal{N}}} J_+^{(3)k}\ket{\psi_\mathrm{min}},
\end{equation} 
where $\mathcal{N}$ denotes the normalization constant, scales as half the 
\defn{spectral radius} of $\bm J^2$, $\varrho(\bm J^2)$, if $k=\lceil\frac{2j_\mathrm{max}+1}{2}\rceil$, where $j_
\mathrm{max}$ is related to the maximum eigenvalue of $\bm J^2$ via Eq.~\eqref{eq:angular-momentum}, and $\lceil\cdot\rceil$ is the ceiling function.
\end{reptheorem}

\begin{proof}
Recall that $(\Delta H)^2$, with respect to a state $\ket{\psi}$, is given by
\begin{equation}
(\Delta H)^2=\bra{\psi}H^2\ket{\psi}-\bra{\psi}H\ket{\psi}^2.
\label{eq:B1}
\end{equation}
Now let $\ket{\psi}=\sqrt{\frac{1}{\mathcal{N}}}\, J_+^{(3)k}\ket{\psi_\mathrm{min}}$, 
where $\mathcal{N}=\bra{\psi_\mathrm{min}}J_-^{(3)k}J_+^{(3)k}\ket{\psi_\mathrm{min}}$, with 
\begin{equation}
 S_3\ket{\psi_\mathrm{min}}= m_{\min}\ket{\psi_{\min}}
 \label{eq:B2}
\end{equation}
and recall that the eigenvalues, $m$, of $S_3$ lie within the range $-j\leq m\leq j$. As 
$\ket{\psi_{\min}}$ is an eigenstate of $S_3$ with the smallest possible eigenvalue, it follows that 
$m_{\min}=-c\,j_{\max}(j_{\max}+1)$, where $c\,j_{\max}(j_{\max}+1)$ is the maximum possible value of $\bm J^2$.  

Substituting $\ket{\psi}$ into Eq.~\eqref{eq:B1}, and noting that $J_+^{(3)\dagger}=J_-^{(3)}$, yields
\begin{align}\nonumber
(\Delta H)^2=&\frac{1}{\mathcal{N}}\bra{\psi_\mathrm{min}}J_-^{(3)k}H^2J_+^{(3)k}\ket{\psi_\mathrm{min}}\\
&-\frac{1}{\mathcal{N}^2}\bra{\psi_\mathrm{min}}J_-^{(3)k}HJ_+^{(3)k}\ket{\psi_\mathrm{min}}^2.
\label{eq:B3}
\end{align}
Using $H\equiv S_1=1/\sqrt{2}(J^{(3)}_++J^{(3)}_-)$ (see Eq.\eqref{eq:ladders}) we have
\begin{align}\nonumber
J_-^{(3)k}H^2J_+^{(3)k}&=\frac{1}{2}\left(J_-^{(3)k}J_+^{(3)k+2}+J_-^{(3)k}J_+^{(3)}J_-^{(3)}J_+^{(3)k}\right.\\ 
\nonumber
&\left.+J_-^{(3)k+1}J_+^{(3)k+1}+J_-^{(3)k+2}J_+^{(3)k}\right),\\
J_-^{(3)k}HJ_+^{(3)k}&=\frac{1}{\sqrt{2}}\left(J_-^{(3)k}J_+^{(3)k+1}+J_-^{(3)k+1}J_+^{(3)k}\right).
\label{eq:B4}
\end{align}
As $J_+^{(3)m}\ket{\psi_{\min}}$ is an eigenstate of $S_3$, with eigenvalue $c(-j_{\max}+m)$, it follows that 
\begin{equation}
\bra{\psi_{\min}}J_-^{(3)m}J_+^{(3)n}\ket{\psi_{\min}}\propto\delta_{mn}
\label{eq:B5}
\end{equation}
as either $J_+^{(3)n}\ket{\psi_{\min}}$, $J_+^{(3)m}\ket{\psi_{\min}}$ correspond to different 
eigenvalues of $S_3$ or
\begin{align*}
J_+^{(3)n}\ket{\psi_{\min}}=&0,\\
J_+^{(3)m}\ket{\psi_{\min}}=&0.
\end{align*}
Hence, Eq.~\eqref{eq:B3} reduces to 
\begin{align}\nonumber
(\Delta H)^2&=\frac{1}{2\mathcal{N}}\left(\bra{\psi_{\min}}J_-^{(3)k}\{J_-^{(3)},J_+^{(3)}\}J_+^{(3)k}\ket{\psi_{\min}}\right)\\ 
\nonumber
&=\frac{1}{2\mathcal{N}}\left(\bra{\psi_{\min}}J_-^{(3)k}\bm J^2J_+^{(3)k}\ket{\psi_{\min}}\right.\\
&\left.-\bra{\psi_{\min}}J_-^{(3)k}S_3^2J_+^{(3)k}\ket{\psi_{\min}}\right)
\label{eq:B6}
\end{align}
where we have made use of the expression $\bm J^2=S_3^2+\{J_+^{(3)},J_-^{(3)}\}$.  As
$\frac{1}{\mathcal{N}}\bra{\psi_{\min}}J_-^{(3)k}S_3^2J_+^{(3)k}\ket{\psi_{\min}}=\frac{c^2}{\mathcal{N}}(-j_{\max}+k)^2
$ , and $\bra{\psi_{\min}}J_-^{(3)k}\bm J^2J_+^{(3)k}\ket{\psi_{\min}}=c^2j_{\max}(j_{\max}+1)=\varrho(\bm J^2)$, we would like 
to choose $k$ such that the second term of Eq.~\eqref{eq:B6} is as small as possible (i.e.~$
\mathcal{O}(1)$).  This occurs for states, $J^{(3)k}_+\ket{\psi_\mathrm{min}}$, whose $S_3$ eigenvalue is close to zero.  Hence, we simply apply 
$J_+^{(3)}$ a number of times equal to 
\begin{equation}
k=\left\lceil\frac{2j_{\max}+1}{2}\right\rceil.
\label{eq:B7}
\end{equation}
Finally, note that $\bra{\psi_\mathrm{min}}J_-^{(3)k}J_+^{(3)k}\ket{\psi_\mathrm{min}}=\mathcal{N}$ which cancels the 
normalization in Eq.~\eqref{eq:B6}.  This completes the proof.
\end{proof}

\section{Proofs of Lemma~\ref{lem1} and Theorem~\ref{thm2}}
\label{append3}

In this appendix we determine a class of Hamiltonians for which the construction of pretty good states for noiseless 
metrology given in Sec.~\ref{sec:PGS} applies.  Given a Hamiltonian, $H\equiv S_1$, and assuming that $H$ has 
homogeneously gapped spectrum, we show that two Hermitian operators $S_2, \, S_3$, such that 
$\{S_1,S_2,S_3\}$ are generators of $\mathfrak{su}(2)$ must be of a particular form (Lemma~\ref{lem1}).  
We then show in Theorem~\ref{thm2} that in order to determine the Hermitian operators $S_2$, $S_3$ such 
that $\{S_1,S_2,S_3\}$ are generators of $\mathfrak{su}(2)$ the multiplicities, $d_k$, of the homogeneously gapped 
spectrum of eigenvalues, $\lambda_k$, of $S_1$ must necessarily obey the conditions $d_{k+1}\geq d_k$ and 
$d_k=d_{n+1-k}$ for all $k$.  In addition, Theorem~\ref{thm2} also provides one possible choice for the Hermitian operators $S_2$ and $S_3$.   

Throughout this appendix we will assume the operators $S_2, \, S_3$ are Hermitian.  Furthermore, we will assume 
without loss of generality that the eigenvalues of $S_1$ are arranged in  decreasing order, 
i.e. $\lambda_1>\lambda_2>\ldots>\lambda_N$ and that the spectrum of $S_1$ is homogeneously gapped. 

We begin by proving Lemma~\ref{lem1}
\begin{replemma}{lem1}
Let $S_1$ be given as in Eq.~\eqref{eq:spectraldecomp} and let $S_2,\, S_3$ be two Hermitian operators. 
If the spectrum of $S_1$ is homogeneously gapped, i.e.~$|\lambda_{k+1}-\lambda_{k}|=c,\,\forall k$, and the 
conditions in Eq.~\eqref{eq:Lie-algebra} hold then
\begin{align}\nonumber
S_2&=\sum_{k=1}^n\ket{k+1}\bra{k}\otimes S_2^{(k,k+1)}+\ket{k}\bra{k+1}\otimes S_2^{(k+1,k)}\\
S_3&=\sum_{k=1}^n\ket{k+1}\bra{k}\otimes S_3^{(k,k+1)}+\ket{k}\bra{k+1}\otimes S_3^{(k+1,k)},
\label{eq:C1}
\end{align}
where $S_2^{(k,l)}$ ($S_3^{(k,l)}$) are $d_l\times d_k$ matrices, and $S_2^{(k,k+1)}=-i S_3^{(k,k+1)},\,\forall k$.
\end{replemma}

\begin{proof}
Substituting the third equation of Eq.~\eqref{eq:Lie-algebra} into the second gives
\begin{equation}
c^2S_2=H^2S_2+S_2H^2-2HS_2H.
\label{eq:C2}
\end{equation}
Write 
\begin{equation}
S_2=\sum_{k,l=1}^n\ket{k}\bra{l}\otimes S_2^{(l,k)},
\label{eq:C3}
\end{equation}
where $S_2^{(l,k)}$ is a $d_k\times d_l$ matrix.   As $S_2$ is Hermitian by assumption, 
$S_2^{(l,k)\dagger}=S_2^{(k,l)}$.  Plugging Eqs.~(\ref{eq:spectraldecomp},~\ref{eq:C3}) into Eq.~\eqref{eq:C2} one obtains, after some algebra,
\begin{equation}
0=\sum_{k,l}\left[(\lambda_k-\lambda_l)^2-c^2\right]\ket{k}\bra{l}\otimes S_2^{(l,k)}. 
\label{eq:C5}
\end{equation}
As the spectrum of $S_1$ is homogeneously gapped by assumption it follows that 
\begin{enumerate}
\item $S_2^{(k,k)}=0$.
\item For $\lambda_{k-1}-\lambda_k=c$ and $\lambda_{k+1}-\lambda_k=-c$, 
$S_2^{(k,k-1)}$ and $S_2^{(k,k+1)}$ can be arbitrary.
\item For $m>1$, $S_2^{(k\pm m,k)}=0$.
\end{enumerate}
Hence, the only non-zero matrices in Eq.~\eqref{eq:C3} are those immediately above and below the main diagonal, 
i.e.
\begin{equation}
S_2=\sum_{k}^{n-1}\ket{k+1}\bra{k}\otimes S_2^{(k,k+1)}+\ket{k}\bra{k+1}\otimes S_2^{(k+1,k)}.
\label{eq:C6}
\end{equation}
Following similar arguments as above one finds that 
\begin{equation}
S_3=\sum_{k}^{n-1}\ket{k+1}\bra{k}\otimes S_3^{(k,k+1)}+\ket{k}\bra{k+1}\otimes S_3^{(k+1,k)}.
\label{eq:C7}
\end{equation}

We now show how the matrices $S_2^{(k,k+1)}$ and $S_3^{(k,k+1)}$ are related. 
Plugging Eqs.~(\ref{eq:C6},~\ref{eq:C7}) into the first equation of Eq.~\eqref{eq:Lie-algebra} one obtains, after some 
algebra,
\begin{align}\nonumber
&\sum_k\left[ (\lambda_{k+1}-\lambda_k)\ketbra{k+1}{k}\otimes S_2^{(k,k+1)}\right.\\ \nonumber
&\left.+(\lambda_{k}-\lambda_{k+1})\ketbra{k}{k+1}\otimes S_2^{(k+1,k)}\right]\\
&=i c\sum_k \ketbra{k+1}{k}\otimes S_3^{(k,k+1)}+\ketbra{k}{k+1}\otimes S_3^{(k+1,k)}.
\label{eq:C8}
\end{align} 
As $\lambda_{k}-\lambda_{k+1}=c,\, \forall k$ by assumption, it follows that $S_2^{(k,k+1)}=-iS_3^{(k,k+1)}$. 
This completes the proof. 
\end{proof}

We now prove Theorem~\ref{thm2}.
\begin{reptheorem}{thm2}
Let $S_1$ be given by Eq.~\eqref{eq:spectraldecomp} with the eigenvalues of $S_1$ satisfying 
$-\lambda_k=\lambda_{n-k+1}, \,\lambda_{k}-\lambda_{k+1}=c, \,\forall k\in(1,\ldots, n)$. 
In addition, let the operators $S_2,\, S_3$ be given as in Lemma~\ref{lem1}. Necessary conditions for 
Eq.~\eqref{eq:Lie-algebra} to hold are that $d_{k+1}\geq d_{k}$ for $1\leq k\leq\lfloor\frac{n}{2}\rfloor$, 
and  $d_{k}=d_{n+1-k}$.  Furthermore, one possible solution for the matrices $S_3^{(k,k+1)}$ is given by the 
$d_{k+1}\times d_k$ matrix 
\begin{align}\nonumber
S_3^{(k,k+1)}&=\sqrt{\frac{c}{2}}\,\mathrm{diag}\left(\underbrace{\sqrt{\sum_{i=1}^k\lambda_i}}_{d_1\,\mathrm{times}},
\underbrace{\sqrt{\sum_{i=2}^k\lambda_i}}_{(d_2-d_1)\,\mathrm{times}},\ldots,\right.\\
& \left.\underbrace{\sqrt{\sum_{i=k-1}^k\lambda_i}}_{(d_{k-1}-d_{k-2})\,\mathrm{times}},\underbrace{\sqrt{\lambda_k}}
_{(d_k-d_{k-1})\,\mathrm{times}}\right).
\label{eq:C9}
\end{align}
\end{reptheorem}

\begin{proof}
Calculating the commutator between $S_2$ and $S_3$ and using the fact that $S_2^{(k,k+1)}=-i S_3^{(k,k+1)}$ (see 
Lemma~\ref{lem1}) one obtains
\begin{align}\nonumber
&[S_2,S_3]=2 i\left( \sum_k\ketbra{k}{k}\otimes S_3^{(k+1,k)}S_3^{(k,k+1)}-\right.\\ 
&\left.\ketbra{k+1}{k+1}\otimes S_3^{(k,k+1)}S_3^{(k+1,k)}\right)
\label{eq:C10}
\end{align}

As $[S_2,S_3]=i c H$ and assuming that $n$ is even, one obtains the following set of equations
\begin{align}\nonumber
&S_3^{(2,1)}S_3^{(1,2)}= \frac{c\lambda_1}{2}\one_{d_1}\\ \nonumber
&S_3^{(3,2)}S_3^{(2,3)}-S_3^{(1,2)}S_3^{(2,1)}=\frac{c\lambda_2}{2}\one_{d_2}\\ \nonumber
&\:\:\:\:\:\:\:\:\:\:\:\:\:\:\:\:\:\:\:\:\:\:\:\:\:\:\:\:     \vdots\\ \nonumber
&S_3^{(n/2+1,n/2)}S_3^{(n/2,n/2+1)}-\\ \nonumber
&S_3^{(n/2-1,n/2)}S_3^{(n/2,n/2-1)}=\frac{c\lambda_{n/2}}
{2}\one_{d_{n/2}}\\ \nonumber
&S_3^{(n/2+2,n/2+1)}S_3^{(n/2+1,n/2+2)}-\\ \nonumber
&S_3^{(n/2,n/2+1)}S_3^{(n/2+1,n/2)}=\frac{c\lambda_{n/2+1}}
{2}\one_{d_{n/2+1}}\\ \nonumber
&\:\:\:\:\:\:\:\:\:\:\:\:\:\:\:\:\:\:\:\:\:\:\:\:\:\:\:\:     \vdots\\ 
&-S_3^{(n-1,n)}S_3^{(n,n-1)}= \frac{c\lambda_n}{2}\one_{d_n}
\label{eq:C11}
\end{align}

Using the singular value decomposition of $S_3^{(k,l)}$, define the unitary matrices 
$W^{(k,l)}:\cH_{d_l}\to\cH_{d_l}$, and $V^{(k,l)}: \cH_{d_k}\to\cH_{d_k}$, such that 
\begin{equation}
S_3^{(k,l)}=W^{(k,l)}D_3^{(k,l)}V^{(k,l)^\dagger},
\label{eq:C12}
\end{equation}
where $D_3^{(k,l)}$ is a $d_l\times d_k$ matrix containing the singular values of $S_3^{(k,l)}$ along its diagonal and 
zeros everywhere else.  Then the equations in Eq.~\eqref{eq:C11} read 
\begin{align}\nonumber
&W^{(2,1)}D_3^{(2,1)}D_3^{(1,2)}W^{(2,1)\dagger}=\frac{c\lambda_1}{2}\one_{d_1}\\ \nonumber
&W^{(3,2)}D_3^{(3,2)}D_3^{(2,3)}W^{(3,2)\dagger}-\\ \nonumber
&V^{(2,1)}D_3^{(1,2)}D_3^{(2,1)}V^{(2,1)\dagger}=\frac{c\lambda_2}{2}\one_{d_2}\\ \nonumber
&\:\:\:\:\:\:\:\:\:\:\:\:\:\:\:\:\:\:\:\:\:\:\:\:\:\:\:\:     \vdots\\ 
&-V^{(n,n-1)}D_3^{(n-1,n)}D_3^{(n,n-1)}V^{(n,n-1)^\dagger}=\frac{c\lambda_{n}}{2}\one_{d_n}.
\label{eq:C13}
\end{align}

For $k<n$, multiplying the $k^{\mathrm{th}}$ equation in Eq.~\eqref{eq:C13} from the left by $W^{(k+1,k)\dagger}$ and 
on the right by $W^{(k+1,k)}$ gives  
\begin{align}\nonumber
&D_3^{(2,1)}D_3^{(1,2)}=\frac{c\lambda_1}{2}\one_{d_1}\\ \nonumber
&D_3^{(3,2)}D_3^{(2,3)}-U^{(2,1)}D_3^{(1,2)}D_3^{(2,1)}U^{(2,1)\dagger}=\frac{c\lambda_2}{2}\one_{d_2}\\ \nonumber
&\:\:\:\:\:\:\:\:\:\:\:\:\:\:\:\:\:\:\:\:\:\:\:\:\:\:\:\:\:\:\:\:\:\:\:\:\:\:\:\:\:\:   \vdots\\ 
&-V^{(n,n-1)}D_3^{(n-1,n}D_3^{(n,n-1)}V^{(n,n-1)^\dagger}=\frac{c\lambda_{n}}{2}\one_{d_n}.
\label{eq:C14}
\end{align}
where $U^{(k+1,k)}=W^{(k+1,k)^\dagger} V^{(k,k-1)}$.  One possible solution for Eq.~\eqref{eq:C14} is given by choosing  $U^{(k+1,k)}=\one_{d_{k}},\, \forall k(1,n-1)$ and $V^{(n,n-1)}=\one_{d_n}$.  

As $D_3^{(2,1)}$ is a $d_1\times d_2$ diagonal matrix, it follows that in order for the first equation in Eq.~\eqref{eq:C14} to hold it is necessary that $d_2\geq d_1$ as otherwise $D_3^{(2,1)}D_3^{(1,2)}$ would contain $d_1-d_2$ zeros in the diagonal.  Moreover
\begin{equation}
D_3^{(1,2)}D_3^{(2,1)}=\frac{c}{2}\,\mathrm{diag}(\underbrace{\lambda_1,\ldots\lambda_1}_{d_1\, \mathrm{times}},\underbrace{0,\ldots,0}_{(d_2-d_1)\,\mathrm{times}}),
\label{eq:C15}
\end{equation}
which, upon substituting in the second equation of Eq.~\eqref{eq:C14}, gives
\begin{equation}
D_3^{(3,2)}D_3^{(2,3)}=\frac{c}{2}\,\mathrm{diag}(\underbrace{\lambda_1+\lambda_2,\ldots,\lambda_1+\lambda_2}_{d_1\, \mathrm{times}},\underbrace{\lambda_2,\ldots,\lambda_2}_{(d_2-d_1)\, \mathrm{times}}).
\label{eq:C16}
\end{equation}
As $D_3^{(3,2)}$ is a $d_2\times d_3$ diagonal matrix, the above equation implies that $d_3\geq d_2$ as otherwise 
$D_3^{(3,2)}D_3^{(2,3)}$ will contain $d_2-d_3$ zeroes.  Proceeding 
recursively one finds that for $1\leq k\leq n/2$, $d_{k+1}\geq d_k$ and that $D_3^{(k+1,k)}D_3^{(k,k+1)}$ is a 
$d_k\times d_k$ diagonal matrix whose first $d_1$ elements are equal to $\frac{c}{2}\sum_{i=1}^{k}\lambda_k$, the 
next $d_2-d_1$ elements are equal to $\frac{c}{2}\sum_{i=2}^{k}\lambda_k$, the next $d_3-d_2$ elements are equal 
to $\frac{c}{2}\sum_{i=3}^{k}\lambda_k$ and so on until the last $d_k-d_{k-1}$ elements which are equal to $\frac{c}{2}\lambda_k$. 

Let us now consider the $(n/2+1)^{\mathrm{th}}$ equation in Eq.~\eqref{eq:C14} given by 
\begin{align}\nonumber
&D_3^{(n/2+2,n/2+1)}D_3^{(n/2+1,n/2+2)}-\\
&D_3^{(n/2,n/2+1)}D_3^{(n/2+1,n/2)}=\frac{c\lambda_{n/2+1}}{2}\one_{d_{n/2+1}}.
\label{eq:C17}
\end{align}
From the argument above $D_3^{(n/2+1,n/2)}D_3^{(n/2,n/2+1)}$ is a $d_{n/2}\times d_{n/2}$ diagonal matrix whose 
last $d_{n/2}-d_{n/2-1}$ are equal to $\frac{c}{2}\lambda_{n/2}$.  As $d_{n/2+1}\geq d_{n/2}$, it follows that  
$D_3^{(n/2,n/2+1)}D_3^{(n/2+1,n/2)}$ is a $d_{n/2+1}\times d_{n/2+1}$ diagonal matrix whose first $d_{n/2}$ elements are equal to the elements of $D_3^{(n/2+1,n/2)}D_3^{(n/2,n/2+1)}$ and the remaining $d_{n/2+1}-d_{n/2}$ elements are equal to zero. As $\lambda_{n/2+1}=-\lambda_{n/2}$ by assumption Eq.~\eqref{eq:C17} becomes
\begin{align}\nonumber
D_3^{(n/2+2,n/2+1)}D_3^{(n/2+1,n/2+2)}&=D_3^{(n/2,n/2+1)}D_3^{(n/2+1,n/2)}\\
&-\frac{c\lambda_{n/2}}{2}\one_{d_{n/2+1}}.
\label{eq:C18}
\end{align}
As $D_3^{(n/2+2,n/2+1)}D_3^{(n/2+1,n/2+2)}$ must be a positive semidefinite matrix it is necessary that  
$d_{n/2+1}=d_{n/2}$.  Proceeding recursively through the remaining equations in Eq.~\eqref{eq:C14} one establishes 
that in order for $D_3^{(n/2+k+1,n/2+k)}D_3^{(n/2+k,n/2+k+1)}$ to be a positive semidefinite matrix it is necessary that  
$d_k=d_{n+1-k}$ and that for $n/2+1\leq k\leq n$, $d_k\leq d_{k+1}$ thus proving the theorem.  A similar argument holds for the case where $n$ is odd.  This completes the proof.  
\end{proof}

\section{Eigenspectrum of nearest-neighbor Hamiltonian}
\label{append4}

In this appendix we determine the eigenvalues and corresponding multiplicities of the nearest-neighbor 
Hamiltonian in Eq.~\eqref{eq:nearest_neighbourH}.
\begin{observation}
Let 
\begin{equation}
H_{nn}=\sum_{i=1}^{n-1}\sz^{(i)}\sz^{(i+1)}.
\label{eq:D1}
\end{equation}
Then $\sigma(H_{nn})=\{\lambda_x=n-1-2x\,|\, x\in (0\ldots n-1)\}$ where each $\lambda_x$ has multiplicity 
given by 2$n-1\choose x$.
\end{observation}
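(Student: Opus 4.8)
The plan is to diagonalise $H_{nn}$ directly in the computational basis, exploiting the fact that the terms $\sz^{(i)}\sz^{(i+1)}$ all mutually commute. Writing $\sz\ket{s}=s\ket{s}$ with $s\in\{+1,-1\}$, every basis vector $\ket{s_1\cdots s_n}$ is an eigenvector of $H_{nn}$ with eigenvalue $\sum_{i=1}^{n-1}s_is_{i+1}$, so the whole problem reduces to a combinatorial count: for each attainable value $\lambda$, how many strings $(s_1,\ldots,s_n)\in\{\pm1\}^n$ satisfy $\sum_{i=1}^{n-1}s_is_{i+1}=\lambda$?

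The key step I would use is the change of variables $\phi:(s_1,\ldots,s_n)\mapsto(s_1,\,s_1s_2,\,s_2s_3,\ldots,s_{n-1}s_n)\equiv(s_1,t_1,\ldots,t_{n-1})$ on $\{\pm1\}^n$. This map is a bijection---it is inverted by reading off $s_1$ and then setting $s_{k+1}=s_kt_k$ for $k=1,\ldots,n-1$---and it satisfies $\sum_{i=1}^{n-1}s_is_{i+1}=\sum_{i=1}^{n-1}t_i$. Hence the eigenvalue depends only on $(t_1,\ldots,t_{n-1})$, while $s_1$ ranges freely over $\{\pm1\}$ and therefore contributes an overall factor of $2$ to every multiplicity. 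Equivalently---and this is the conceptual content---$\phi$ is implemented by a staircase of CNOT gates, giving a Clifford unitary $V$ with $H_{nn}=V\big(\sum_{i=1}^{n-1}\sz^{(i)}\big)V^\dagger$, so that the spectrum of $H_{nn}$ is simply that of $\sum_{i=1}^{n-1}\sz^{(i)}$ acting on $n$ qubits with the last qubit untouched.

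It then remains to observe that for $t_i\in\{\pm1\}$ one has $\sum_{i=1}^{n-1}t_i=(n-1)-2x$, where $x=\#\{i:t_i=-1\}\in\{0,1,\ldots,n-1\}$; thus the eigenvalues are exactly $\lambda_x=n-1-2x$, and the number of tuples $(t_1,\ldots,t_{n-1})$ realising a prescribed $x$ is $\binom{n-1}{x}$. Multiplying by the factor $2$ coming from the free choice of $s_1$ yields multiplicity $2\binom{n-1}{x}$, which is the claim. As a consistency check one has $\sum_{x=0}^{n-1}2\binom{n-1}{x}=2^{n}=\dim\cH^{\otimes n}$. I do not expect any genuine obstacle here; the only points needing care are the bookkeeping between the $\{0,1\}$ and $\{\pm1\}$ labelling conventions and the verification that $\phi$ is a bijection preserving the relevant sum.
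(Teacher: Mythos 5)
Your proposal is correct and follows essentially the same route as the paper: both diagonalise $H_{nn}$ in the computational basis and count, for each number $x$ of negative bond terms $s_is_{i+1}$, the $\binom{n-1}{x}$ sign patterns together with a factor of $2$. Your explicit bijection $(s_1,\ldots,s_n)\mapsto(s_1,s_1s_2,\ldots,s_{n-1}s_n)$ (equivalently, the CNOT-staircase conjugation to $\sum_{i=1}^{n-1}\sz^{(i)}$) merely makes rigorous the factor of $2$ that the paper asserts informally, which is a small but welcome sharpening.
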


\begin{proof}
As 
\begin{equation}
\sz=\ketbra{0}{0}-\ketbra{1}{1},
\label{eq:D2}
\end{equation}
Eq.~\eqref{eq:D1} reads
\begin{equation}
H_{nn}=\sum_{\bm m}\sum_{k=1}^{n-1}(-1)^{\left(\sum_{i=k}^{k+1}m_i\right)}\ketbra{\bm m}{\bm m},
\label{eq:D3} 
\end{equation}
where ${\bm m}\equiv m_1\ldots m_n$.  Hence, the spectrum of $H_{nn}$ is given by
\begin{align}\nonumber
\sigma(H_{nn})&=\left\{(-1)^{(m_1+m_2)}+(-1)^{(m_2+m_3)}+\ldots \right.\\
&\left.+(-1)^{(m_{n-1}+m_n)}|\, m_i\in(0,1),\,\forall i\in(1,\ldots,n)\right\}. 
\label{eq:D4}
\end{align}
Clearly the maximum eigenvalue of $H_{nn}$ is $n-1$, and occurs when $\forall i\in(1,\ldots,n),\, m_i=0\,
\mathrm{or}\, 1$. The lowest eigenvalue is $-(n-1)$ and occurs when $m_i+m_{i+1}=1,\,\forall i\in
(1,\ldots,n-1)$.  Furthermore, the second highest eigenvalue is $n-3$ and occurs when all but one of the 
summands in Eq.~\eqref{eq:D4} are positive and one is negative.  Similarly it follows that the spectrum of $H_{nn}$ is 
given by 
\begin{equation}
\sigma(H_{nn})=\{\lambda_x= n-1-2x,\, |\, x\in (0,\ldots, n-1)\}.
\label{eq:D5}
\end{equation}

To obtain the multiplicity of each eigenvalue one simply looks at the number of different combinations of 
summing positive and negative ones in order to yield a specific eigenvalue.  Each summand in Eq.~\eqref{eq:D4}
can be either $1$ or $-1$, with the former occurring when both $m_i,m_{i+1}$ are the same and the latter when
$m_i,m_{i+1}$ are different.  As $\lambda_0$ contains no negative summands it follows that the total number of ways 
of obtaining $\lambda_0$ is 2$n-1\choose 0$.  Similarly $\lambda_1$ can be obtained by choosing one out of the total 
of $n-1$ summands negative  and this can be done 2$n-1\choose 1$.  It is not hard to see that there are a total of 
2$n-1\choose x$ different ways to obtain $\lambda_x$.  This completes the proof. 
\end{proof}

\section{Upper bounds for quantum metrology under local dephasing channel}
\label{append5}

In this appendix we derive the upper bounds of 
Eqs.~(\ref{eq:local-dephasing-localH-bound},~\ref{eq:Escherboundnearestneighbour}) corresponding to phase 
estimation in the presence of local dephasing noise using a local and nearest-neighbor Hamiltonian respectively. To 
that end we determine $\Xi$ and $\Omega$ in Eq.~\eqref{eq:local-dephasing-escher-special-operators}, where 
$B=S_x$, and the operators $S_{\bm m}$ in Eq.~\eqref{eq:Krausn} are explicitly given by
\begin{equation}
 S_{\bm k}=p^{\frac{N-h(\bm k)}{2}}(1-p)^{\frac{h(\bm k)}{2}}\otimes_{i=1}^N\sz^{(k_i)},
 \label{E1}
\end{equation}
where $h(\bm k)$ is the Hamming weight of the binary vector $\bm k$ and $k_i$ is the $i^{\mathrm{th}}$ 
entry of $\bm k$.  Note that $S_{\bm k}$ is a diagonal matrix for all $\bm k$.  

As the matrix elements of $S_x$ in the computational basis are given by    
\begin{equation}
\left[S_x\right]_{\bm l,\bm k}=\begin{cases}
							    1, & \text{if }\bm k\in\{\sx^{(i)}\bm l\}_{i=1}^N\\
                                                            0, & \text{otherwise},
\label{E2}                                                     \end{cases}
\end{equation}
where the set $\{\sx^{(i)}\bm l\}_{i=1}^N$ contains all $N$-dimensional, binary vectors obtained from $\bm l$ by flipping 
one of its bits. Simple algebra gives 
\begin{equation}
\sum_{\bm l,\bm k}S_{\bm l}\left[S_x\right]_{\bm{lk}}S_{\bm k}=2p^{\frac{1}{2}}(1-p)^{\frac{1}{2}}S_z.
\label{E3}
\end{equation}
A similar calculation yields
\begin{equation}
\sum_{\bm l,\bm k}S_{\bm l}\left[S_x^2\right]_{\bm{lk}}S_{\bm k}=N\left(1-4p(1-p)\right)\one +4p(1-p)S_z^2
\label{E4}
\end{equation}
Plugging Eqs.~(\ref{E3},~\ref{E4}) into Eq.~\eqref{eq:local-dephasing-escher-lower-bound}, and recalling that $H=S_z$ yields Eq.~\eqref{eq:local-dephasing-localH-bound}.

For the case of the nearest-neighbor Hamiltonian (Eq.~\eqref{eq:nearest_neighbourH}) $\Xi$ and $\Omega$ read
\begin{align}\nonumber
\Xi&=2p^{\frac{1}{2}}(1-p)^{\frac{1}{2}}\left(\left\langle HS_z\right\rangle-\left\langle H\right\rangle\left\langle S_z\right\rangle\right)\\
\Omega&=N\left(1-4p(1-p)\right)\one +4p(1-p)\Delta S_z,
\end{align}
which, upon substituting into Eq.~\eqref{eq:local-dephasing-escher-lower-bound} yields 
Eq.~\eqref{eq:Escherboundnearestneighbour}.

\bibliographystyle{apsrev4-1}
\bibliography{parameterestimation}
\end{document}